\definecolor{greylight}{rgb}{0.7,0.7,0.7}
\definecolor{greyblue}{rgb}{0.0,0.0,0.5}
\definecolor{greyred}{rgb}{0.5,0.0,0.0}
\definecolor{lightgreyblue}{rgb}{0.5,0.5,0.9}
\definecolor{lightgreyred}{rgb}{0.6,0.2,0.2}
\definecolor{lightgreyorange}{rgb}{0.7,0.5,0.2}
\newlength\savedintextsep 
\title{Intuitionistic Euler-Venn Diagrams (extended)\thanks{This work was
supported by EPSRC Research Programme EP/N007565/1
\emph{Science of Sensor Systems Software}.}}
\author{Sven Linker }
\institute{University of Liverpool, UK\\
\email{s.linker@liverpool.ac.uk}
}
\begin{document}

\maketitle

\begin{abstract}
  We present an intuitionistic interpretation
  of Euler-Venn diagrams with respect
  to Heyting algebras. In contrast to classical Euler-Venn
  diagrams, we treat shaded and missing zones differently,
  to have diagrammatic representations of conjunction, disjunction
  and intuitionistic implication. We
  present a cut-free sequent calculus for this language, and prove it
  to be sound and complete.  Furthermore, we show that the rules of
  cut, weakening and contraction are admissible.
\end{abstract}
\begin{keywords}
   intuitionistic logic \textperiodcentered{}
Euler-Venn diagrams \textperiodcentered{}
proof theory
\end{keywords}

\section{Introduction}
\label{sec:intro}


Among diagrammatic systems to reason about logic, Euler-Venn circles have
a long tradition. They are
known to be a well-suited visualisation of classical
propositional logic. In previous work \cite{Linker2018}, we have presented a
proof system in the style of sequent calculus \cite{Gentzen1935} to
reason with Euler-Venn diagrams. There, we speculated that, similar to sentential
languages, restricting the rules and sequents in the system would
allow for intuitionistic reasoning with Euler-Venn diagrams. However,
further investigation showed that such a simple change is not
sufficient, due to the typical use of the syntax elements of Euler-Venn diagrams.

Consider for example the diagrams in Fig.~\ref{fig:example-equivalence}. In the classical
interpretation, these diagrams are equivalent: the shaded zone in Fig.~\ref{fig:example-vd}
denotes that the situation that \(a\) is true and \(b\) is false is prohibited,
which is exactly what the omission of the zone included in the contour \(a\), but not in \(b\) in Fig.~\ref{fig:example-ed}
signifies as well. 
\setlength\intextsep{0pt}
\begin{wrapfigure}[8]{l}{.35\textwidth}
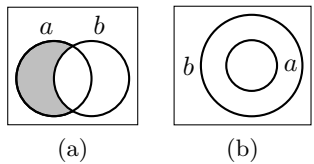

  \centering
\subfloat[]{ 
  \subsetVD{a}{b}{1cm}
  \label{fig:example-vd}
}
\subfloat[]{ 
  \subsetED{b}{a}{1.35cm}
  \label{fig:example-ed}
}
  \caption{Euler-Venn diagrams}
  \label{fig:example-equivalence}
\end{wrapfigure}
\setlength{\intextsep}{\savedintextsep}
That is, shading a zone and omitting it is equivalent in classical
Euler-Venn diagrams. Additionally, we can interpret these two
diagrams in two ways: Fig.~\ref{fig:example-vd} may intuitively
be read as \(\lnot ( a \land \lnot b)\): we do not
allow for the valuations satisfying \(a\), but not \(b\). Fig.~\ref{fig:example-ed}, however, is more
naturally read as \(a \implies b\): whenever a valuation satisfies
\(a\), it also satisfies \(b\). While in a classical interpretation,
these two statements are indeed equivalent, they are generally
not equivalent in an intuitionistic interpretation. Hence, we want
to treat missing zones and shaded zones differently. Since
typically, proof systems for Euler diagrams allow to change
missing zones into shaded zones \cite{Linker2018,Howse2005,Stapleton2007}, this
implies a stronger deviation from our sequent calculus rules than anticipated.

Furthermore, we want to emphasise a constructive approach to reasoning. In particular,
instead of emphasising a \emph{negative} property by prohibiting 
interpretations of the diagrams, we will treat shading as a \emph{positive}
denotation. While this would not make much of a difference in a classical system,
negation in intuitionistic systems is much weaker, and hence not suited
as a basic element for the semantics of a language.

In this paper, we present an intuitionistic interpretation of Euler-Venn diagrams that
takes the preceeding considerations into account. To that end, we will
distinguish between \emph{pure Venn}, \emph{pure Euler} and \emph{Euler-Venn}
diagrams, and present intuitionistic interpretations of these types of
diagrams based on Heyting algebras. Subsequently, we present a proof system
in the style of sequent calculus, which we prove to be sound and complete.
Furthermore, we show that the structural rules of weakening, contraction
and cut are admissible. 

\paragraph{Related Work.} 
Many reasoning systems for visualisations of classical
logic have been defined over time, for example the initial work
of Venn~\cite{Venn1881} and Peirce~\cite{Peirce1931} and subsequently
the work of Shin~\cite{Shin1995} and Hammer~\cite{Hammer1996}, as well
Spider diagrams by Howse et al.~\cite{Howse2005}. Most of these
systems are not directly comparable to sentential reasoning systems,
due to very different structure of the rules, with the notable exception
of the work by Mineshima et al.~\cite{Mineshima2010} and Takemura~\cite{Takemura18}.

However, the situation is different for non-classical logics. There
are several visual reasoning systems for non-classical variants of
Existential Graphs. For example, Bellucci et al. defined
\emph{assertive graphs}~\cite{Bellucci2018}, including
a system based on rules for iteration and deletion of graphs, among others.
This logical language reflects intuitionistic logic, but the rules manipulate only
single graphs, while sequent calculus systems manipulate sequents of diagrams.
Ma and Pietarinen presented a graphical system for intuitionistic logic \cite{Ma2019}
and proved its equivalence with Gentzen's
single
succedent sequent calculus for propositional intuitionistic logic. To that end,
they translate the graphs into sentential formulas. They also extended their approach
to existential graphs with quasi-Boolean algebras as their semantics~\cite{Ma2018}.
Legris pointed out that structural rules of sequent calculi can
be seen as special instances of rules in the proof systems for existential
graphs, to analyse substructural logics~\cite{Legris2018}. de Freitas and
Viana presented a calculus to reason about intuitionistic equations \cite{deFreitas2012}.
However, we are not aware of any intuitionistic reasoning systems using Euler-Venn-like
visualisations.

\paragraph{Structure of the paper.} Following this introduction,
we briefly recall the foundations of intuitionistic logic
and its semantics in terms of Heyting algebras in Sect.~\ref{sec:intuition}.
In Sect.~\ref{sec:diagrams}, we define the system of Euler-Venn diagrams,
followed by the graphical sequent calculus system,
as well as soundness and completeness proofs, in Sect.~\ref{sec:sequent}.
Section~\ref{sec:admissible} contains proofs for the admissibility
of the structural rules.
Finally, we discuss our system and conclude the paper in Sect.~\ref{sec:conc}.


\section{Intuitionistic Logic}
\label{sec:intuition}
In this section, we give a very brief overview of the
aspects of propositional intuitionistic logic we will use. We start by
presenting the underlying semantical model we use,
Heyting algebras.

\begin{definition}[Heyting Algebra]
  \label{def:heyting}
  A \emph{Heyting algebra} \(\model = (H, \hor, \hand, \himplies, 0, 1)\) is a bounded, distributive
  lattice, where \(\hor\) is the join,
  \(\hand\) the meet, \(0\) the bottom and \(1\) the top element of the lattice. Observe that
  such a bounded lattice  possesses a natural partial order \(\leq\) on its elements. The
  binary operation \(\himplies\), \emph{the implication}, is defined by \(c \hand a \leq b\)
  if, and only if, \(c \leq a \implies b\). That is, \(a \implies b\) is the join
  of all elements \(c\) such that \(c \hand a \leq b\).
  We will use the abbreviation
  \(\hnot a\) for \( a \himplies 0\).
  Furthermore, we set \(\hbigand_{i \in \emptyset} a_i = 1\) and \(\hbigor_{i \in \emptyset} a_i = 0\) for any
  \(a_i\).
\end{definition}

\newcounter{heyting}
\newcommand{\heytingprop}{\refstepcounter{heyting}\textnormal{(\arabic{heyting})}}

We collect a few basic properties of Heyting algebras that we need in
the following. Proofs can be found, e.g.,  in the work of Rasiowa and Sikorski \cite{Rasiowa1963}.

\begin{lemma}[Properties of Heyting Algebras]
  \label{lem:prop}
  Let \(\model\) be a Heyting algebra. Then for all elements \(a\), \(b\) and \(c\),
  we have
  \vspace{.5em}
  
  \begin{tabular}{p{.25\textwidth}p{.25\textwidth}p{.4\textwidth}}
    \(a \hand (a \himplies b)  \leq b\) \heytingprop\label{eq:mp}
    &
      \(       (a \himplies b) \hand b  = b\) \heytingprop \label{eq:imp_b}
    &
      \(    a \himplies ( b \himplies c)  = (a \hand b ) \himplies c\) \heytingprop \label{eq:impl_and}
  \end{tabular}
\end{lemma}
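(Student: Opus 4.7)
The plan is to derive all three properties from a single ingredient, the defining adjunction
\[ c \hand a \leq b \iff c \leq a \himplies b,\]
together with the basic lattice facts (reflexivity of \(\leq\), and that the meet lies below each of its operands). Since a Heyting algebra is in particular a distributive lattice, I will silently use commutativity and associativity of \(\hand\).

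For property (\ref{eq:mp}), I would instantiate the adjunction at \(c := a \himplies b\). The reflexive inequality \(a \himplies b \leq a \himplies b\) transports, under the ``only if'' direction of the adjunction, to \((a \himplies b) \hand a \leq b\); commutativity of \(\hand\) then yields \(a \hand (a \himplies b) \leq b\). This is the modus-ponens fact that the other two properties will build on. For property (\ref{eq:imp_b}), I would prove the two inequalities separately. The direction \((a \himplies b) \hand b \leq b\) is immediate from the definition of meet. For \(b \leq (a \himplies b) \hand b\), I would combine the trivial \(b \leq b\) with \(b \leq a \himplies b\), where the latter follows by applying the adjunction to the trivial \(b \hand a \leq b\).

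For property (\ref{eq:impl_and}), I would again establish the two inequalities independently through the adjunction. In the forward direction, \(a \himplies (b \himplies c) \leq (a \hand b) \himplies c\) reduces by adjunction to the claim \((a \himplies (b \himplies c)) \hand (a \hand b) \leq c\); applying (\ref{eq:mp}) once gives \((a \himplies (b \himplies c)) \hand a \leq b \himplies c\), and meeting both sides with \(b\) and applying (\ref{eq:mp}) a second time finishes the chain. For the reverse direction, \((a \hand b) \himplies c \leq a \himplies (b \himplies c)\) unwinds via two successive adjunction steps to \(((a \hand b) \himplies c) \hand (a \hand b) \leq c\), which is just (\ref{eq:mp}) applied at \(a \hand b\).

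No deep obstacle is expected here, as these are textbook facts about Heyting algebras. The only care required is in (\ref{eq:impl_and}), where the adjunction is used in both directions and the bookkeeping of associativity and commutativity around the nested meets must be tracked precisely; the derivation chain is short, but it is easy to misapply (\ref{eq:mp}) in the wrong argument position.
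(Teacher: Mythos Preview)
Your proof is correct. The paper itself does not prove this lemma at all: it simply states the three properties and defers to Rasiowa and Sikorski~\cite{Rasiowa1963} for the proofs. Your explicit derivation from the defining adjunction \(c \hand a \leq b \iff c \leq a \himplies b\) is the standard textbook argument and is exactly what one would find in such a reference, so there is nothing to compare beyond noting that you have supplied what the paper only cites.
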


The syntax of propositional intuitionistic logic is similar to classical Boolean logic,
with the difference that the operators are not interdefinable. Hence, the signs
for conjunction,
disjunction, and implication are all necessary as distinct symbols, and
cannot be treated as abbreviations. We will assume a fixed, countable
set of propositional variables \(\vars\).
\begin{definition}[Syntax]
  \label{def:logic_syn}
  An intuitionistic formula is given by the following EBNF
  \begin{align*}
    \formula \colon = \false \mid p \mid \formula \land \formula \mid \formula \lor \formula \mid \formula \implies \formula \enspace, \text{where } p \in \vars \enspace.
  \end{align*}
\end{definition}

We will treat negation as the abbreviation \(\lnot \formula \equiv \formula \implies \false\).
Furthermore, we let \(\true \equiv \false \implies \false\). The semantics of a formula is based on valuations,  associating each variable
with an element of a given Heyting algebra.
\begin{definition}[Semantics]
  \label{def:logic_sem}
  Let \(\model \) be a Heyting algebra and \(\valSing \colon \vars \to H\) a \emph{valuation}, mapping  variables to elements of \(\model\).
We lift valuations to formulas.
  \begin{align*}
    \val{\false} & = 0\\
    \val{\formula \land \formulaTwo} & = \val{\formula} \hand \val{\formulaTwo}\\
    \val{\formula \lor \formulaTwo} & = \val{\formula} \hor \val{\formulaTwo}\\
    \val{\formula \implies \formulaTwo} & = \val{\formula} \himplies \val{\formulaTwo}
  \end{align*}
  A formula \(\formula\) \emph{holds} in \(\model\), if \(\val{\formula} = 1\). If \(\varphi\)
  holds for every valuation of \(\model\), we write \(\model \models \formula\). If \(\model \models
  \formula\) for every Heyting algebra \(\model\), we say that \(\formula\) is \emph{valid}.
\end{definition}

\section{Euler-Venn Diagrams}
\label{sec:diagrams}
In this section, we present the syntax and semantics of
Euler-Venn diagrams with an intuitionistic interpretation.
Generally, a diagram can be \emph{unitary} or \emph{compound}.
A unitary diagram consists of a set of \emph{contours}
dividing the space enclosed by a 
bounding rectangle into different \emph{zones}. Zones
may also be shaded. Depending on how the contours
may be arranged, and whether zones may be shaded,
we distinguish between \emph{Venn} diagrams,
\emph{Euler} diagrams, and \emph{Euler-Venn} diagrams.
Compound diagrams are constructed recursively. Since
the structure of compound diagrams is the same, regardless
of the type of unitary diagrams, we present their syntax
first.
\begin{definition}[Compound Diagrams]
  \label{def:compound-syntax}
  A \emph{compound diagram} is created according
  to the following syntax,
 \begin{align*}
   D &::= d \mid   D \land D \mid D \lor D \mid 
 D \implies D \, ,
 \end{align*}
where \(d\) is a
 unitary diagram.
\end{definition}

\begin{definition}[Compound Diagram Semantics]
  \label{def:compound-semantics}
  The  semantics of compound diagrams for a Heyting algebra \(\model\) and a valuation \(\valSing\)
  is given as follows.
 \begin{align*}
   \val{D_1 \land D_2} & = \val{D_1} \hand \val{D_2} \\
   \val{D_1 \implies D_2} & = \val{D_1} \himplies \val{D_2}  \\
   \val{D_1 \lor D_2} & = \val{D_1} \hor \val{D_2}
 \end{align*}
 where 
 \(D_1\), \(D_2\) are compound diagrams. If \(\val{D} = 1\), for all intuitionistic models \(\model\)
 and valuations \(\valSing\) then we call \(D\) \emph{valid}.
\end{definition}
Observe that we did not give the semantics for unitary diagrams in the previous
definition. While we will fill this gap in the next sections, we first
present notations that are used for all types of diagrams alike.
Formally, a \emph{zone}
for a finite set of contours \(\contours \subset \vars\) is a tuple
\((\zin, \zout)\), where \(\zin\) and \(\zout\) are disjoint 
subsets of \(\contours\) such that \(\zin \cup \zout = \contours\).
We will also write \(\zinF{z}\) and \(\zoutF{z}\) to refer to
the corresponding sets of contours in \(z\).
The set of all possible zones for a given set of contours is denoted
by \(\vennzones{\contours}\).

\paragraph{Venn Diagrams}
\label{sec:venn-def}
A \emph{Venn diagram} is a
diagram where all possible zones for a set of contours are visible. Formally,
a Venn diagram is of the shape \(d = (\contours, \vennzones{\contours}, \ezones)\).
Hence the only diagrammatic elements that may carry meaning are the presence of
contours, and whether a zone is shaded. For a given diagram \(d\),
we denote the set of shaded zones also by \(\ezonesOf{d}\).
We allow for the
diagrams \(\false = (\emptyset, \{(\emptyset, \emptyset)\}, \emptyset)\)
and \(\true = (\emptyset, \{(\emptyset, \emptyset)\}, \{(\emptyset, \emptyset)\})\).
A \emph{literal} is a Venn diagram for a single contour, with exactly one
shaded zone.
If the zone \((\emptyset, \{c\})\) is shaded in a literal, then we call it \emph{the negative literal  for \(c\)}, otherwise
it is \emph{the positive literal for \(c\)} (see Fig.~\ref{fig:literal-unitary}). Furthermore, if \(d\) is the positive literal for \(c\),
then we call the negative literal for \(c\) the \emph{dual  of \(d\)} (and vice versa). Observe that our notion of literals  deviates
from the original definition of Stapleton and Masthoff \cite{Stapleton2007} and from our previous work~\cite{Linker2018}. 
The main difference between our presentation
and classical Venn diagrams is the interpretation of shaded zones.
\setlength\intextsep{0pt}
\begin{wrapfigure}[4]{l}{.22\textwidth}
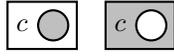

\centering
\unitaryLiteralPos{c}{.4cm} 
\unitaryLiteralNeg{c}{.4cm}
\caption{Literals}
  \label{fig:literal-unitary}
\end{wrapfigure}
\setlength{\intextsep}{\savedintextsep}
While in
the traditional approach, shading denotes the \emph{emptiness} of sets, we use shading
as a \emph{marker} of elements. That is, the semantics of a diagram
consists of the join of the elements denoted by the shaded zones. This
is more in line with the constructivist approach we want to emphasise: instead
of relying on a negative aspect (emptiness), we construct the semantics out of their
building blocks (the shaded zones).

\begin{definition}[Zone Semantics]
  \label{def:zone_sem}
  Let  \(\model\) be a Heyting algebra,  \(\valSing\) a  valuation, 
  and \(z\) a zone.
  The \emph{semantics} of 
  \(z\) is 
  given by
  \(\val{z} 
  = {\hbigand_{c \in \zinF{z}} \val{c}{}  \hand \hbigand_{c \in \zoutF{z}} {\hnot\val{c}{}}}
\).
\end{definition}
With the semantics of single zones defined, we can now define the semantics of a Venn diagram
in general.
\begin{definition}[Venn Diagram Semantics]
  \label{def:venn-semantics}
  For a Venn diagram \(d\), a Heyting algebra \(\model\) and a valuation \(\valSing\),
  the semantics of \(d\) are given by
  \( \val{d}  = \hbigor_{z \in \ezonesOf{d}} \val{z}\). 
\end{definition}

Note that we have \(\val{\top} = 1\) and \(\val{\bot} =  0\), for any Heyting algebra \(\model\) and valuation \(\valSing\). 
Furthermore, for a unitary diagram with a single contour and no shaded zones, i.e. \(d = (\{a\}, \vennzones{\{a\}},\emptyset)\),
we have \(\val{d} = 0\). However, the semantics already diverge from the classical case  for a fully shaded
diagram with one contour: if \(d = (\{a\}, \vennzones{\{a\}}, \vennzones{\{a\}})\), then
\(\val{d} = \val{a} \hor \hnot{\val{a}}\), which in general is not equal to \(1\).

Note that this semantics has one consequence in particular: we can decompose a zone into an equivalent
compound diagram, and we can furthermore decompose any unitary Venn diagram into a disjunctive normal form.
\begin{lemma}
  \label{lem:dnf}
  Let \(z \) be a zone for the contours \(L\). Then the semantics of the  compound diagram
  \(  d_z  = \bigwedge_{c \in \zinF{z}} \unitaryLiteralPos{c}{.2cm} \land \bigwedge_{c \in \zoutF{z}} \unitaryLiteralNeg{c}{.2cm}\)
  equals the semantics of \(z\), i.e. \(\val{d_z} = \val{z}\). Furthermore, for a Venn diagram \(d = (\contours, \vennzones{\contours}, \ezones)\), we have \(\val{d} = \val{\bigvee_{z \in \ezones} d_z}\).
\end{lemma}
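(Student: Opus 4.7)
The plan is to prove the two assertions in sequence, each by unfolding the semantic definitions and then invoking standard lattice identities.

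For the first part, I would start by computing $\val{d_z}$ using Definition~\ref{def:compound-semantics} (compound diagram semantics), which turns the conjunction of literals into a meet:
\[
\val{d_z} = \hbigand_{c \in \zinF{z}} \val{\unitaryLiteralPos{c}{.2cm}} \,\hand\, \hbigand_{c \in \zoutF{z}} \val{\unitaryLiteralNeg{c}{.2cm}}.
\]
The next step is to evaluate each literal using Definitions~\ref{def:zone_sem} and \ref{def:venn-semantics}. For the positive literal of $c$, the single shaded zone is $(\{c\},\emptyset)$, whose semantics is $\val{c} \hand \hbigand_{c' \in \emptyset} \hnot\val{c'} = \val{c}$ (using the convention $\hbigand_{i \in \emptyset} a_i = 1$). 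For the negative literal, the sole shaded zone is $(\emptyset,\{c\})$, giving $\hnot\val{c}$. Substituting back yields exactly the expression $\hbigand_{c \in \zinF{z}} \val{c} \hand \hbigand_{c \in \zoutF{z}} \hnot\val{c}$, which by Definition~\ref{def:zone_sem} is $\val{z}$.

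The second part then follows almost immediately. By Definition~\ref{def:compound-semantics}, $\val{\bigvee_{z \in \ezones} d_z} = \hbigor_{z \in \ezones} \val{d_z}$, and by the first part each $\val{d_z}$ equals $\val{z}$. Definition~\ref{def:venn-semantics} identifies this join with $\val{d}$. I would also briefly address the boundary case $\ezones = \emptyset$, where both sides collapse to $0$ via the convention $\hbigor_{i \in \emptyset} a_i = 0$, so the statement holds vacuously.

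I do not expect any real obstacle: the proof is essentially a bookkeeping exercise, since the semantic clauses for unitary zones, for Venn diagrams, and for the compound connectives align syntactically. The only point that requires a small amount of care is the handling of empty conjunctions and disjunctions (as in the degenerate diagrams $\true$ and $\false$), where one must rely on the explicit conventions from Definition~\ref{def:heyting} to avoid undefined expressions.
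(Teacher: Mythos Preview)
Your proposal is correct and follows exactly the same approach as the paper, which simply notes that the result is immediate from Definitions~\ref{def:zone_sem} and~\ref{def:venn-semantics}; you have merely spelled out the unfolding that the paper leaves implicit.
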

\begin{proof}
  Immediate by the semantics in Def.~\ref{def:zone_sem} and Def.~\ref{def:venn-semantics}. \qed
\end{proof}
In particular, this implies that we cannot draw a unitary diagram that expresses intuitionistic implication.
\begin{lemma}
  Let \(a\) and \(b\) be propositional variables. Then there is no unitary Venn diagram \(d\) such that
  \(\val{d} = \val{a \implies b}\) for all models and valuations.
\end{lemma}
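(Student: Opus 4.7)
The plan is to use the disjunctive normal form of Lemma~\ref{lem:dnf}, which tells us that the semantics of any unitary Venn diagram $d$ can be written purely as a join of meets of terms of the form $\val{c}$ or $\hnot \val{c}$, where $c$ ranges over the contours of $d$. In particular, $\himplies$ never appears in $\val{d}$ except in the disguise of $\hnot$ applied to a single variable. The idea is therefore to pick a Heyting algebra in which $\hnot$ collapses drastically, so that no such expression can realise the value $\val{a \implies b}$.

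Concretely, I would take $\model$ to be the three-element chain $\{0, m, 1\}$ with $0 < m < 1$ and its unique Heyting algebra structure; here $\hnot 0 = 1$ and $\hnot m = \hnot 1 = 0$. Choose a valuation $\valSing$ with $\val{a} = \val{b} = m$ and $\val{c} = 1$ for every other propositional variable $c$. Then $\val{a \implies b} = m \himplies m = 1$. Suppose for contradiction that a unitary Venn diagram $d$ with contour set $\contours$ satisfies $\val{d} = \val{a \implies b}$ in every model. By Lemma~\ref{lem:dnf}, $\val{d}$ is the join, over the shaded zones $z$ of $d$, of $\hbigand_{c \in \zinF{z}} \val{c} \hand \hbigand_{c \in \zoutF{z}} \hnot \val{c}$. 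For any contour $c \in \contours \setminus \{a,b\}$ we have $\val{c} = 1$ and $\hnot \val{c} = 0$, so every zone placing such a $c$ in $\zoutF{z}$ evaluates to $0$; the surviving zones evaluate to meets of elements drawn from $\{\val{a}, \val{b}, \hnot \val{a}, \hnot \val{b}\} = \{0, m\}$. Hence $\val{d} \in \{0, m\}$, strictly below $1 = \val{a \implies b}$, a contradiction.

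The main obstacle I anticipate is making sure the argument is not spoiled by diagrams whose contour set properly extends $\{a, b\}$. Setting every auxiliary contour to the top element simultaneously kills every zone that would place such a contour outside, while pinning the remaining zones to the sub-lattice generated by $\{0, m\}$ under $\hand$ and $\hor$; once this is observed, the result follows directly from Lemma~\ref{lem:dnf} and the degeneracy of $\hnot$ in a linear Heyting algebra.
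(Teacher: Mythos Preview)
Your approach is sound and more explicit than the paper's own proof, which merely observes via Lemma~\ref{lem:dnf} that any unitary Venn diagram decomposes into literals combined by \(\lor\) and \(\land\), and then cites Rasiowa--Sikorski for the fact that intuitionistic \(\implies\) is not definable from the remaining connectives. Your three-element chain is exactly the standard countermodel behind that non-definability result, so you are unpacking the citation rather than appealing to it; the payoff is a self-contained argument.

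There is one edge case your single valuation does not cover. If neither \(a\) nor \(b\) lies in \(\contours\), then the ``surviving'' zone with \(\zoutF{z} = \emptyset\) carries no factor from \(\{\val{a},\val{b},\hnot\val{a},\hnot\val{b}\}\) at all; after discarding the factors equal to \(1\) you are left with the empty meet, whose value is \(1\), not an element of \(\{0,m\}\). Concretely, \(d = \true\) gives \(\val{\true} = 1 = \val{a \implies b}\) under your valuation, so no contradiction arises. The repair is one line: if \(a \notin \contours\) then \(\val{d}\) is independent of \(\val{a}\), whereas \(\val{a \implies b}\) is not (fix \(\val{b} = 0\) and let \(\val{a}\) range over \(\{0,1\}\)); hence \(a \in \contours\), and symmetrically \(b \in \contours\). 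Once at least one of these contours is present, every zone carries a factor from \(\{m,0\}\) and your bound \(\val{d} \leq m < 1\) goes through.
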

\begin{proof}
  By Lemma~\ref{lem:dnf}, every unitary diagram \(d\) can be expressed by using \(\lor\) and \(\land\) only.
  However, \(\implies\) is not definable by any combination of \(\lor\) and \(\land\)~\cite{Rasiowa1963}. \qed
\end{proof}
Observe however that we can trivially define a \emph{compound} diagram \(\unitaryLiteralPos{a}{.2cm} \implies \unitaryLiteralPos{b}{.2cm}\).

\paragraph{Pure Euler Diagrams}
\label{sec:euler-def}
We need additional syntax if we want to express intuitionistic
implication diagrammatically. This new syntax needs to be directed (since \(a \implies b\) is
semantically different to \(b \implies a\)). Observe that our notion of zones is \emph{already} directed, and
expresses topological information. So, a natural consideration is to allow for \emph{missing} zones in the diagrams. Hence,
instead of using Venn diagrams we will now discuss pure Euler diagrams. In contrast to shaded zones, we will treat
the missing zones as ``restrictions on the construction'' of the semantics. 
First, we give the semantics of a missing zone. 
\begin{definition}[Missing Zone Semantics]
  \label{def:mz-sematincs}
  For a Heyting algebra \(\model\),  a valuation \(\valSing\)
  and a zone \(z\), the \emph{missing zone semantics} of \(z\) is 
  given
by
\(\msem{z}
= \left(\hbigand_{c \in \zinF{z}} \val{c}\right) \himplies \left( \hbigor_{c \in \zoutF{z}} \val{c} \right)
\).
\end{definition}

\begin{definition}[Pure Euler Diagrams]
  \label{def:pure_euler}
  A \emph{pure Euler diagram}  is a structure \(d = (\contours, \zones)\),
  where \(\contours\) is the set of contours and \(\zones\) the set of visible
  zones of \(d\).  Furthermore, the set \(\mzones{d} = \vennzones{\contours} \setminus \zones\) is
  the set of \emph{missing zones} of \(d\). The semantics of pure Euler diagrams is that they require the constraints defined
by their missing zones to be true. That is, for a pure Euler diagram  \(d\), we have
\(  \val{d} = \hbigand_{z \in \mzones{d}} \msem{z}\). 
\end{definition}

In contrast to Venn diagrams, pure Euler diagrams do not
  allow for \emph{any} shading. To distinguish pure Euler diagrams from Venn diagrams (and
  Euler-Venn diagrams, see below), we draw them with
dotted contours.

Even with this additional syntax, we are not able to express \emph{every} implication.
A simple example would be \(a \implies a\), since we cannot have a zone \((\{a\}, \{a\})\).
However, for this particular example, we do not lose expressivity, since \(a \implies a \equiv \top\)
for all \(a\).
But we have a diagram equivalent to \(a \implies b\), 
\setlength\intextsep{0pt}
\begin{wrapfigure}[6]{l}{.35\textwidth}
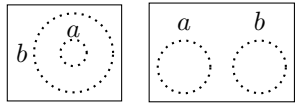

\centering
\subsetEDCstr{b}{a}{.35cm}
\disjointEDCstr{a}{b}{.7cm}
\caption{Pure Euler Diagrams}
  \label{fig:imp-disj}
\end{wrapfigure}
\setlength{\intextsep}{\savedintextsep}
as shown in the left diagram of
Fig.~\ref{fig:imp-disj}.
The right diagram in Fig.~\ref{fig:imp-disj} denotes \((a \hand b) \himplies 0\), which
is \(-(a \hand b)\).
Observe that in contrast to Venn diagrams without shaded zones,
a pure Euler diagram without   missing zones denotes \(1\), i.e.,  for \(d = (\contours, \vennzones{\contours})\), 
we have \(\val{d} = \val{\top} = 1\). Furthermore, the diagram without any contours
and zones denotes \(0\), since
\(\val{(\emptyset, \emptyset)} = \msem{(\emptyset, \emptyset)} =
\hbigand_{c \in \emptyset} \val{c} \himplies \hbigor_{c \in \emptyset} \val{c} = 1 \himplies 0 = 0\). 
In the following, we will need to identify zones that are divided by a contour \(c\) abstractly.
Intuitively such a zone is split into two zones \(z\) and \(z^\prime\) that only differ insofar, as \(c\)
is in \(\zinF{z}\) and in \(\zoutF{z^\prime}\).

\begin{definition}[Adjacent Zone]
Let \(z = (\zin, \zout)\) be a zone for the contours in
\(\contours\) and \(c \in \contours\). The \emph{zone adjacent to 
\(z\) at \(c\)}, denoted by \(\adj{z}{c}\) is \((\zin \cup \{c\}, \zout \setminus \{c\})\), if \(c \in \zout\)
and \((\zin \setminus \{c\}, \zout \cup \{c\})\) if \(c \in \zin\). 
\end{definition}

Now we can define a way to 
remove contours from a  pure Euler diagram \(d\). This contrasts to our previous work, where we
allowed that the diagram to be reduced contains shading \cite{Linker2018}. 
\begin{definition}[Reduction]
\label{def:reduction}
Let \(d = (\contours, \zones)\) be a  pure Euler diagram and
\(c \in \contours\). The \emph{reduction} of a zone 
\(z = (\zin, \zout)\) is 
\(  z \setminus c = (\zin \setminus \{c\}, \zout \setminus \{c\})\).
The \emph{reduction} of \(d\) by \(c\) is defined as \(d \setminus c = (\contours \setminus \{c\}, \zones\setminus c)\),
where
\(\zones \setminus c  = \{ z \setminus c \mid z \in \zones\}\).
\end{definition}

\begin{lemma}[Properties of Reduction]
  \label{lem:prop-reduction}
We have \(z \setminus c = \adj{z}{c} \setminus c\).
Furthermore, for each \(z^\prime \in \mzones{d\setminus c }\) and \(z\) with
\(z \setminus c = z^\prime\), we have \(z \in \mzones{d}\). In particular,
both \(z \in \mzones{d}\) and \(\adj{z}{c} \in \mzones{d}\).
\end{lemma}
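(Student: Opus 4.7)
The plan is to prove the three claims in order, each reducing to a straightforward set-theoretic bookkeeping argument. The main conceptual point is that reduction by $c$ forgets whether $c$ was in $\zin$ or $\zout$, which is exactly the distinction that $\adj{\cdot}{c}$ toggles.

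For the first equation $z \setminus c = \adj{z}{c} \setminus c$, I would proceed by case analysis on whether $c \in \zinF{z}$ or $c \in \zoutF{z}$ (exactly one holds since $\zin$ and $\zout$ partition the contour set $\contours$ and $c \in \contours$). In the case $c \in \zoutF{z}$, we have $\adj{z}{c} = (\zin \cup \{c\}, \zout \setminus \{c\})$, and then
\[
\adj{z}{c} \setminus c = \bigl((\zin \cup \{c\}) \setminus \{c\},\; (\zout \setminus \{c\}) \setminus \{c\}\bigr) = (\zin \setminus \{c\}, \zout \setminus \{c\}) = z \setminus c,
\]
using $c \notin \zin$. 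The symmetric computation handles $c \in \zinF{z}$.

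For the second claim, I would argue by contrapositive. Suppose $z \notin \mzones{d}$, i.e.\ $z \in \zones$. Then by Definition~\ref{def:reduction}, $z \setminus c \in \zones \setminus c$, so $z' = z \setminus c \notin \mzones{d \setminus c}$. Contrapositively, if $z' \in \mzones{d \setminus c}$ and $z \setminus c = z'$, then $z \in \mzones{d}$.

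The ``in particular'' part then combines the first two: by the first claim, $\adj{z}{c} \setminus c = z \setminus c$, so if one of $z, \adj{z}{c}$ reduces to $z' \in \mzones{d \setminus c}$, so does the other, and hence both belong to $\mzones{d}$ by the second claim. No step here looks like a real obstacle; the only care needed is to make the case distinction on $c \in \zinF{z}$ vs.\ $c \in \zoutF{z}$ explicit so that the set subtractions simplify cleanly.
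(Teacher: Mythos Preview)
Your proof is correct and follows exactly the approach the paper intends: the paper's own proof consists of the single line ``Immediately from the definition of reduction,'' and your argument simply unpacks that line via the obvious case split and contrapositive.
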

\begin{proof} Immediately from the definition of reduction.\qed\end{proof}
 If each
 missing zone in a pure Euler diagram \(d\) has a  missing adjacent zone,
 then the  reduction of \(d\) by any contour is contained in the semantics of \(d\). In
 particular, the meet of all reductions equals the semantics of \(d\). This will
 allow us to show soundness of some rules of the sequent calculus in Sect.~\ref{sec:sequent}.
 \begin{lemma}
\label{lem:sem-reduction}
Let \(d = (\contours, \zones)\) be a pure Euler diagram,  
where 
for each \(z \in  \mzones{d}\), there is a contour \(\ell \in \contours\) such that 
\(\adj{z}{\ell} \in  \mzones{d}\). Furthermore, let \(\contours^\prime = \{ c \mid \mzones{d \setminus c} \neq \emptyset\}\).
Then \(\hbigand_{c \in \contours^\prime} \val{d \setminus c} = \val{d} \)
\end{lemma}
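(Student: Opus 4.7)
The strategy is to prove the equality by establishing both inequalities separately, exploiting the correspondence from Lemma~\ref{lem:prop-reduction} between a missing zone $z' \in \mzones{d \setminus c}$ and the pair of $c$-adjacent missing zones $z, \adj{z}{c} \in \mzones{d}$ with $z \setminus c = z'$. Throughout, I unfold $\val{d} = \hbigand_{z \in \mzones{d}} \msem{z}$ and reduce each direction to pointwise comparisons between missing-zone semantics.

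For $\val{d} \leq \hbigand_{c \in \contours^\prime} \val{d \setminus c}$, it suffices to show $\val{d} \leq \msem{z'}$ for each $z' \in \mzones{d \setminus c}$ and each $c \in \contours^\prime$. Lemma~\ref{lem:prop-reduction} gives both preimages $z, \adj{z}{c} \in \mzones{d}$, so $\val{d} \leq \msem{z} \hand \msem{\adj{z}{c}}$; writing $A = \hbigand_{x \in \zinF{z'}} \val{x}$, $B = \hbigor_{x \in \zoutF{z'}} \val{x}$, and $C = \val{c}$, this reduces to the Heyting-algebraic inequality
\begin{align*}
\bigl((C \hand A) \himplies B\bigr) \hand \bigl(A \himplies (C \hor B)\bigr) \leq A \himplies B.
\end{align*}
I plan to verify this by adjunction: meeting both sides with $A$ and applying Lemma~\ref{lem:prop} gives $p \hand q \hand A \leq p \hand A \hand (C \hor B)$, and then distributivity of $\hand$ over $\hor$ splits this into $(p \hand A \hand C) \hor (p \hand A \hand B)$, each summand being bounded by $B$.

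For the converse $\hbigand_{c \in \contours^\prime} \val{d \setminus c} \leq \val{d}$, I fix $z \in \mzones{d}$ and use the hypothesis to pick $\ell \in \contours$ with $\adj{z}{\ell} \in \mzones{d}$. Since both $z$ and $\adj{z}{\ell}$ are missing and any zone in $\zones$ whose reduction equals $z \setminus \ell$ must coincide with one of them, I get $z \setminus \ell \in \mzones{d \setminus \ell}$, whence $\ell \in \contours^\prime$. A short case analysis on whether $\ell \in \zinF{z}$ or $\ell \in \zoutF{z}$ then shows $\msem{z \setminus \ell} \leq \msem{z}$—either by strengthening the antecedent or weakening the consequent of the Heyting implication—yielding $\hbigand_{c \in \contours^\prime} \val{d \setminus c} \leq \val{d \setminus \ell} \leq \msem{z}$. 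The degenerate case $\mzones{d} = \emptyset$ makes $\contours^\prime = \emptyset$, and both sides evaluate to $1$.

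I expect the main obstacle to be the first direction's Heyting-algebraic inequality: since general Heyting algebras do not admit case analysis on truth values, the classical argument ``$C$ holds or $B$ holds, and in both cases $B$ follows'' has to be reconstructed through adjunction and distributivity rather than a direct semantic dichotomy.
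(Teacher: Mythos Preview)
Your argument is correct and rests on the same two algebraic facts the paper uses: the monotonicity $\msem{z\setminus c}\leq\msem{z}$ (strengthening the antecedent or weakening the consequent of a Heyting implication), and the key inequality $\msem{z}\hand\msem{\adj{z}{c}}\leq\msem{z\setminus c}$, which you verify by adjunction and distributivity exactly as the paper does via Lemma~\ref{lem:prop}~(\ref{eq:mp}),~(\ref{eq:imp_b}). The overall strategies coincide; you simply treat the two directions in the opposite order.

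Where your organisation differs---and is in fact tighter---is in the direction $\hbigand_{c\in\contours'}\val{d\setminus c}\leq\val{d}$. You fix $z\in\mzones{d}$, invoke the hypothesis to obtain a specific $\ell$ with $\adj{z}{\ell}\in\mzones{d}$, conclude $z\setminus\ell\in\mzones{d\setminus\ell}$ and hence $\ell\in\contours'$, and then bound $\msem{z}$ by the single factor $\val{d\setminus\ell}$ of the meet. The paper instead asserts the stronger intermediate claim $\val{d\setminus c}\leq\val{d}$ for \emph{every} $c\in\contours'$ individually; but a missing zone $z$ need not satisfy $\adj{z}{c}\in\mzones{d}$ for such a $c$, so $z\setminus c$ may be visible in $d\setminus c$ and the single reduction then fails to control $\msem{z}$. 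Your formulation, which uses the full meet and lets the contour depend on $z$, avoids this over-claim while reaching the same conclusion.
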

\begin{proof}
  Let \( c \in \contours^\prime\) and \(z^\prime = (\zin, \zout) \in \mzones{d \setminus c}\).
  Then, let \(z =(\zin \cup\{c\}, \zout)\). That is \(z \setminus c = z^\prime\) and  \(z, \adj{z}{c} \in \mzones{d}\)
  by Lemma~\ref{lem:prop-reduction} (if \(z = (\zin, \zout \cup \{c\})\),
  we can reverse the roles of \(z\) and \(\adj{z}{c}\) in the following).
  Assume \(x \leq \msem{z \setminus c}\)
  Then we have \(x \leq \hbigand_{a \in \zin} \val{a} \himplies  \hbigor_{a \in \zout} \val{a}\), if,
  and only if, \(x \hand \hbigand_{a \in \zin} \val{a}  \leq \hbigor_{a \in \zout} \val{a}\). This implies \(x \hand \hbigand_{a \in \zin} \val{a}  \leq \hbigor_{a \in \zout} \val{a} \hor \val{c}\),
  which is equivalent to \( x \leq \hbigand_{a \in \zin} \val{a}  \himplies \hbigor_{a \in \zout} \val{a} \hor \val{c} = \msem{\adj{z}{c}}\). Also, from \(x \leq \msem{z \setminus c}\), we have
  \(x \hand \hbigand_{a \in \zin} \val{a} \hand \val{c}  \leq  \hbigand_{a \in \zin} \val{a}  \leq \hbigor_{a \in \zout} \val{a}\), which gives us
  \(x \leq \hbigand_{a \in \zin} \val{a} \hand \val{c} \himplies \hbigor_{a \in \zout} \val{a} = \msem{z}\). Hence, we have \(x \leq \msem{z} \hand \msem{\adj{z}{c}}\). That is,
  for each \(z^\prime \in \mzones{d\setminus c}\), we have a \(z \in \mzones{d}\) such that 
  \(\msem{z^\prime} = \msem{z \setminus c} \leq \msem{z} \hand \msem{\adj{z}{c}}\). Thus, we have \(\val{d \setminus c} \leq \val{d}\) for each
  \(c \in \contours^\prime\), that is \(\hbigand_{c \in \contours^\prime} \val{d \setminus c} \leq \val{d}\).

 Conversely, let \(x \leq \val{d}\), i.e.
 \( x \leq \hbigand_{z \in \mzones{d}} \left ( \hbigand_{a \in \zinF{z}} \val{a} \himplies \hbigor_{a \in \zoutF{z}} \val{a}\right)\).
  For an arbitrary \(z \in \mzones{d}\), choose \(c \in \zinF{z}\) and \(c \in L^\prime\), i.e.,
  the zone \(z \setminus c\) is missing in at least one diagram (namely \(d \setminus c\)).
  Of course, we have
  \(x \leq \msem{z}\), from which we get by Lemma~\ref{lem:prop}~(\ref{eq:impl_and})
  \(x \leq \hbigand_{a \in \zinF{z}\setminus \{c\}} \val{a} \himplies \left( \val{c} \himplies \hbigor_{a \in \zoutF{z}} \val{a} \right )\), which is equivalent to
\(    x \hand \hbigand_{a \in \zinF{z}\setminus \{c\}} \val{a} \leq  \val{c} \himplies \hbigor_{a \in \zoutF{z}} \val{a} \). 
  Furthermore, from \(x \leq \msem{\adj{z}{c}}\), we also have
 \(   x \hand \hbigand_{a \in \zinF{z}\setminus \{c\}} \val{a} \leq  \val{c} \hor \hbigor_{a \in \zoutF{z}} \val{a}\). 
  By the properties of a distributive lattice, and Lemma~\ref{lem:prop}~(\ref{eq:mp}) and (\ref{eq:imp_b}), we then get
\(
    x \hand \hbigand_{a \in \zinF{z}\setminus \{c\}} \val{a} \leq  \left(\val{c} \hor \hbigor_{a \in \zoutF{z}} \val{a}\right) \hand \left(  \val{c} \himplies \hbigor_{a \in \zoutF{z}} \val{a}\right )
                                                              \leq \left( \val{c} \hand \left(  \val{c} \himplies \hbigor_{a \in \zoutF{z}} \val{a}\right )\right)
                                                               \hor \left ( \hbigor_{a \in \zoutF{z}} \val{a} \hand \left(  \val{c} \himplies \hbigor_{a \in \zoutF{z}} \val{a}\right )\right)
     \leq \hbigor_{a \in \zoutF{z}} \val{a}\),
     which  is equivalent to \(x \leq \hbigand_{a \in \zinF{z} \setminus \{c\}} \val{a} \himplies \hbigor_{a \in \zoutF{z}}\val{a} = \msem{z \setminus c}\).
     Now, since \(z\) was arbitrary, this reasoning
  holds for all \(z \in \mzones{d}\) (possibly with the roles of \(z\) and \(\adj{z}{c}\) reversed),
  and thus \(x \leq \hbigand_{c \in L^\prime} \val{d \setminus c}\), and hence \(\val{d} \leq \hbigand_{c \in L^\prime} \val{d \setminus c}\). \qed
\end{proof}
For an example, consider the  derivation in Sect.~\ref{sec:admissible}. The diagram \(d_C^\ast\) as
shown in Table~\ref{tab:example_abbrev} 
can be reduced to the three diagrams shown in the application of rule \(\reduceL\) in derivation \(\Pi_1\)
presented in Fig.~\ref{fig:aux_deriv}.

\paragraph{Euler-Venn Diagrams}
\label{sec:euler-venn-def}
In this section, we combine pure Euler diagrams with the central
syntactic aspect of Venn diagrams: shading. Our main idea can be summarised
as follows: We treat the information given by a pure Euler diagram
as a condition for the construction of the  combinations of atomic
propositions denoted by the shading. That is, whenever we have constructions
as indicated by the spatial relations of contours in a diagram \(d\), we also
have a construction of the elements denoted by the shaded zones of the diagram.
Since we use the syntactic elements of pure Euler diagrams and Venn diagrams,
we will subsequently call such diagrams \emph{Euler-Venn diagrams}.

The abstract syntax of Euler-Venn diagrams is similar to Venn diagrams.
A diagram is a tuple \(d = (\contours, \zones, \ezones)\) consisting of
a set of contours \(\contours\), a set of visible zones \(\zones\) over \(\contours\), and
a set of shaded zones \(\ezones \subseteq \zones\). We will often need
to refer to the pure Euler or Venn aspects of an Euler-Venn diagram separately. Hence, we introduce
some additional notation. For an Euler-Venn diagram
 \(d = (\contours, \zones, \ezones)\) we will write
 \(\venn{d} =  (\contours, \vennzones{\contours}, \ezones )\) for
 the Venn diagram with the same set of shaded zones as \(d\),
 and \(\euler{d} =(\contours, \zones)\) for the pure Euler diagram with the same set of
 visible zones as \(d\). Similarly to pure Venn and Euler diagrams, we will
 refer to the missing zones of \(d\) by  \(\mzones{d}\) and
 to its shaded zones by \(\ezonesOf{d}\). 

\begin{definition}[Euler-Venn Diagram Semantics]
  \label{def:euler-venn-semantics}
  The  semantics of a unitary \emph{Euler-Venn diagram} for a Heyting algebra \(\model\) and a valuation
  \(\valSing\) is \(\val{d}  = 
    \val{\euler{d}} \himplies \val{\venn{d}}\).

 \end{definition}
    Observe that  with this definition, the semantics for the case \(\mzones{d} = \emptyset\) and \(\ezonesOf{d} \neq \emptyset\) yields
 \(\val{d} = 1 \himplies \hbigor_{z \in \ezonesOf{d}} \val{z} = \hbigor_{z \in \ezonesOf{d}} \val{z}\).
 Furthermore, we get \(\val{\false} = 1 \himplies 0 = 0\) and
 \(\val{\true} = 1 \himplies 1 = 1\).

Observe that the language of compound Euler-Venn diagrams can be seen as a subset of
intuitionistic logic. In particular, we can translate every diagram into a formula, which
we call its \emph{canonical formula}.

\begin{definition}[Canonical Formula]
  \label{def:canonical}
  The \emph{canonical formula} of an   Euler-Venn diagram is given by the following
  recursive definition. We start with the definition of the canonical formula of shaded
  and missing zones. 
  \begin{align*}
    \canon{z}{z} &= \bigland_{c \in \zinF{z}} c \land \bigland_{c \in \zoutF{z}} -c & 
    \canon{z}{m} &= \bigland_{c \in \zinF{z}}  c \implies \biglor_{c \in \zoutF{z}} c 
  \end{align*}
  For a pure Euler diagram \(d_e\), a Venn diagram \(d_v\), an Euler-Venn diagram
  \(d\) and compound diagrams \(D\) and \(E\),  the canonical formula is given as
  \begin{align*}
    \canon{d_e}{} & = \bigland_{z \in \mzones{d_e}} \canon{z}{m}&
    \canon{d_v}{} & = \biglor_{z \in \ezonesOf{d_v}} \canon{z}{z}\\
    \canon{d}{} & = \canon{\euler{d}}{} \implies \canon{\venn{d}}{} &
    \canon{D \otimes E}{} & = \canon{D}{} \otimes \canon{E}{}\enspace,  \otimes \in \{\land, \lor, \implies \}
  \end{align*}
\end{definition}

\begin{remark}
  \label{rem:simplified-canonical-literals}
  Observe that according to Def.~\ref{def:canonical}, we get
  \(\canon{\unitaryLiteralPos{c}{.15cm}}{} = c \land \true\)
  and \(\canon{\unitaryLiteralNeg{c}{.15cm}}{} = \true \land -c\). However,
  for simplicity, we will assume that the canonical formula construction
  omits superfluous occurences of \(\true\) and \(\false\). Hence,
  \(\canon{\unitaryLiteralPos{c}{.15cm}}{} = c\) and
  \(\canon{\unitaryLiteralNeg{c}{.15cm}}{} = -c\). Similarly, e.g.,
  \(\canon{(\emptyset, \contours)}{m} = \biglor_{c \in \contours} c\). 
\end{remark}


\section{Sequent Calculus}
\label{sec:sequent}
Sequent calculus, as defined by Gentzen \cite{Gentzen1935} is closely
related to natural deduction. It is based on \emph{sequents}, which are decomposed by rule applications.
In the following, we will define a multi-succedent version of sequent calculus
for Euler-Venn diagrams called \(\diagName\). This version is inspired by the work of
Dragalin \cite{Dragalin1988}, while following the more modern presentation
of Negri et al. \cite{Negri2001}. 
\begin{definition}[Sequent]
\label{def:sequent}
A \emph{sequent} \(\sequent{\leftSeq}{\rightSeq}\) consists
of multisets \(\leftSeq\) and \(\rightSeq\) of 
Euler diagrams. The multiset \(\leftSeq\) is called
the \emph{antecedent} and \(\rightSeq\) the \emph{succedent}. 

If \(\leftSeq\) (\(\rightSeq\)) is the empty multiset, we write
\(\sequent{}{\rightSeq}\) (\(\sequent{\leftSeq}{}\), respectively). 
If a sequent is of the form \(\sequent{p,\leftSeq}{\rightSeq, p}\) 
where \(p\) is a positive literal,  
then it is called
an \emph{axiom}. 
A sequent
\(\sequent{D_1 , \dots,  D_k}{E_1, \dots, E_{l}}\) is valid, if, and only if,
\(\val{D_1}  \hand \dots \hand \val{D_k} \leq \val{E_1} \hor \dots \hor \val{E_l}\) for all
valuations \(\valSing\) in all
Heyting algebras. We will often abbreviate \(\val{D_1}  \hand \dots \hand \val{D_k}\)
by \(\val{\leftSeq}\) and \(\val{E_1} \hor \dots \hor \val{E_l}\) by \(\val{\rightSeq}\).
That is, for the 
multiset \(\leftSeq\) we always mean the meet, while for \(\rightSeq\) we always
refer to the join of the diagrams it consists of.
\end{definition}

A \emph{deduction} for a sequent \(\sequent{\leftSeq}{\rightSeq}\)
is a tree, where the root is labelled by \(\sequent{\leftSeq}{\rightSeq}\),
and the children of each node are labelled according to the rules defined 
below. If the validity of the premisses of a rule imply the validity of its
conclusion, we call the rule \emph{sound}.
A deduction where the leaves are labelled with
axioms, or instances of \(\falseL\) and \(\trueR\), is called a \emph{proof} for \(\sequent{\leftSeq}{\rightSeq}\).
We will write \(\provable \sequent{\leftSeq}{\rightSeq}\) to denote the
existence of a proof for \(\sequent{\leftSeq}{\rightSeq}\).
In all rules, we call the diagram in the conclusion that is being decomposed the
\emph{principal diagram} of the rule. For example, in \(\landL\), the principal
diagram is \(D \land E\), and in the rule \(\sepL\) it is \(d\).
For a given proof of \( \sequent{\leftSeq}{\rightSeq}\), its \emph{height}
is the highest number of successive proof rule applications~\cite{Negri2001}.
We will write \(\provable_n \sequent{\leftSeq}{\rightSeq}\) if \(\sequent{\leftSeq}{\rightSeq}\) is provable with a proof
of height at most \(n\).

We now turn to define and explain the rules of \(\diagName\).
The rules to treat compound diagrams, as shown in Fig.~\ref{fig:rules-boolean}, are directly taken from 
sequent calculus
for intuitionistic propositional logic and are sound.

\begin{lemma}[Soundness]
\label{lem:rules-boolean-sound}
The rules for sentential operators are sound.  
\end{lemma}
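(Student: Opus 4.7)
The plan is to prove soundness by induction on the structure of the rule, treating each rule for $\land$, $\lor$, and $\implies$ (on both left and right) as a separate case. For each case, I will assume that the premises are valid sequents and derive the validity of the conclusion, using only the semantic clauses of Def.~\ref{def:compound-semantics} together with the Heyting algebra identities collected in Def.~\ref{def:heyting} and Lemma~\ref{lem:prop}. Throughout I will work with the abbreviations $\val{\leftSeq}$ and $\val{\rightSeq}$ introduced in Def.~\ref{def:sequent}, and freely use distributivity of $\hand$ over $\hor$.

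The easy cases are the conjunction and disjunction rules, which correspond directly to the universal properties of meet and join in a bounded distributive lattice. For example, $\landR$ requires that from $\val{\leftSeq} \leq \val{D} \hor \val{\rightSeq}$ and $\val{\leftSeq} \leq \val{E} \hor \val{\rightSeq}$ one derives $\val{\leftSeq} \leq (\val{D} \hand \val{E}) \hor \val{\rightSeq}$, which follows by taking the meet of the two inequalities and distributing. Dually, $\lorL$ follows by the universal property of $\hor$, $\landL$ by monotonicity, and $\lorR$ by $\val{D} \leq \val{D} \hor \val{E}$. The rules $\falseL$ and $\trueR$ are immediate from $\val{\bot}=0$ and $\val{\top}=1$.

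For $\impliesR$, from the premise $\val{\leftSeq} \hand \val{D} \leq \val{E}$ the Galois adjunction defining $\himplies$ gives $\val{\leftSeq} \leq \val{D} \himplies \val{E} = \val{D \implies E}$, and then monotonicity yields $\val{\leftSeq} \leq \val{D \implies E} \hor \val{\rightSeq}$. The main obstacle, as usual in the multi-succedent Dragalin-style calculus, is $\impliesL$: one has $\val{\leftSeq} \hand \val{D \implies E} \leq \val{D} \hor \val{\rightSeq}$ (first premise) and $\val{\leftSeq} \hand \val{E} \leq \val{\rightSeq}$ (second premise), and must conclude $\val{\leftSeq} \hand \val{D \implies E} \leq \val{\rightSeq}$. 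I would meet both sides of the first premise with $\val{D \implies E}$ and apply Lemma~\ref{lem:prop}~(\ref{eq:mp}) together with distributivity to obtain $\val{\leftSeq} \hand \val{D \implies E} \leq \val{E} \hor \val{\rightSeq}$; meeting this once more with $\val{\leftSeq}$ and distributing, the second premise then collapses the $\val{E}$ summand into $\val{\rightSeq}$.

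Finally, the axiom case $\sequent{p, \leftSeq}{\rightSeq, p}$ is trivial because $\val{\leftSeq} \hand \val{p} \leq \val{p} \leq \val{p} \hor \val{\rightSeq}$ holds in any bounded lattice. Since every rule preserves validity, the lemma follows; because none of these rules touch unitary diagrams, the proof is entirely a book-keeping exercise in Heyting algebra and does not depend on the Euler-Venn-specific clauses of Def.~\ref{def:venn-semantics} or Def.~\ref{def:euler-venn-semantics}.
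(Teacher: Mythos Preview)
Your argument is correct and is exactly the standard Heyting-algebra verification that the paper defers to by citing Ono~\cite{Ono2019}; you have simply spelled out what the paper leaves implicit. One small inaccuracy: in the paper's version of \(\impliesL\) the first premiss is \(\sequent{\leftSeq, D \implies E}{D}\), i.e.\ its succedent is the single diagram \(D\) rather than \(D,\rightSeq\), so the semantic hypothesis is the stronger \(\val{\leftSeq}\hand\val{D\implies E}\leq \val{D}\); your detour through distributivity and a second meet with \(\val{\leftSeq}\) is therefore unnecessary (though still correct), since \(\val{\leftSeq}\hand\val{D\implies E}\leq \val{D}\hand(\val{D}\himplies\val{E})\leq \val{E}\) already gives what you need via Lemma~\ref{lem:prop}~(\ref{eq:mp}). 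Also note that \(\trueR\) belongs to the Venn rules (Lemma~\ref{lem:rules-venn-sound}), not to this lemma.
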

\begin{proof}
  A straightforward adaptation of the proofs shown by Ono~\cite{Ono2019}. \qed
\end{proof}

\begin{remark}
  \label{rem:props-sentential-calculus}
  If we take the placeholders  \(D\), \(E\) and \(F\) as formulas according
  to Def.~\ref{def:logic_syn} and both \(\leftSeq\) and \(\rightSeq\) as
  multisets  of such formulas, then the rules 
  of Fig.~\ref{fig:rules-boolean} together with axioms  \(\sequent{p, \leftSeq}{\rightSeq,p}\)
  form the sentential sequent calculus \(\sentName\)~\cite{Negri2001}.
 Provability in
  \(\sentName\) is equivalent to provability in
  Gentzen's  system \(\mathsf{LJ}\).
  The system \(\mathsf{LJ}\) is sound and complete~\cite{Ono2019}. Hence,
  \(\sentName\) is sound and complete as well. Furthermore, the structural
  rules of weakening, contraction and cut are admissible~\cite{Negri2001}. Observe
  that we treat \(\falseL\) as a \emph{rule}, and not as an axiom.
\end{remark}

\begin{figure}[t]
  \centering
  \begin{tabular}{ccc}
    \prftree[r]{\(\scriptstyle \landL\)}
    {\sequent{D,E,\leftSeq}{\rightSeq}}
    {\sequent{D \land E, \leftSeq}{\rightSeq}}
    &
      \prftree[r]{\(\scriptstyle \lorL\)}
      {\sequent{D,\leftSeq}{\rightSeq}}
      {\sequent{E,\leftSeq}{\rightSeq}}
      {\sequent{D \lor E,\leftSeq}{\rightSeq}}
    &
      \prftree[r]{\(\scriptstyle \impliesL\)}
      {\sequent{\leftSeq, D \implies E}{ D}}
      {\sequent{E, \leftSeq}{\rightSeq}}
      {\sequent{ D \implies E,\leftSeq}{\rightSeq}}
    \\[1em]
    \prftree[r]{\(\scriptstyle \landR\)}
    {\sequent{\leftSeq}{\rightSeq,D}}
    {\sequent{\leftSeq}{\rightSeq,E}}
    {\sequent{\leftSeq}{ \rightSeq, D \land E }}
    &
      \prftree[r]{\(\scriptstyle \lorR\)}
      {\sequent{\leftSeq}{\rightSeq, D,E}}
      {\sequent{\leftSeq}{\rightSeq, D \lor E}}
    &
      \prftree[r]{\(\scriptstyle \impliesR\)}
      {\sequent{D,\leftSeq}{E}}
      {\sequent{\leftSeq}{\rightSeq, D \implies E}}\\[1em]
    &
      \prftree[r]{\(\scriptstyle \falseL\)}{\sequent{\leftSeq, \bot}{\rightSeq}}
    &
  \end{tabular}
  \caption{Proof Rules for Sentential Operators}
  \label{fig:rules-boolean}
\end{figure}

\paragraph{Rules for Venn Diagrams.}
The rules in \ref{fig:rules-venn-literal} let us reduce negative
to positive literals. Observe that we may introduce arbitrary
sets of formulas into the succedent.  This ensures admissability of
the structural rules (cf. Lemma~\ref{lem:weakening} and~\ref{lem:contraction}).
Furthermore, the rule \(\trueR\)
lets us finish a proof similarly to \(\falseL\).
Let \(d = (\contours, \vennzones{\contours},\ezones)\) be a Venn diagram with
\(|\ezones| > 1\), and let 
\(d_i =(\contours, \vennzones{\contours}, \ezones_i)\), for \(i \in \{1,2\}\),
such that
\(\ezones = \ezones_1 \cup \ezones_2\). Then the rules \(\sepL\) and \(\sepR\)
in Fig.~\ref{fig:rules-venn-separation}
\emph{separate} \(d\) into \(d_1\) and \(d_2\). These rules are closely
related to the \emph{Combine}
equivalence rule for Spider diagrams \cite{Howse2005}.
For a Venn diagram \(d\) with \(\ezonesOf{d} = \{z\}\), where
\(z = (\{n_1, \dots, n_k\}, \{o_1, \dots, o_l\})\), the rules \(\singDecL\)
and \(\singDecR\) of Fig.~\ref{fig:rules-venn-decompose} \emph{decompose}
the single zone \(z\) into literals.

 \begin{figure}[h]
\begin{center}
\subfloat[]{
       \prftree[r]{\(\scriptstyle \litL\)}
      {\sequent{\unitaryLiteralNeg{c}{.15cm}, \leftSeq}{\unitaryLiteralPos{c}{.15cm}}}
      {\sequent{\unitaryLiteralNeg{c}{.15cm}, \leftSeq}{\rightSeq}}
       \hspace{1em}
      \prftree[r]{\(\scriptstyle \litR\)}
      {\sequent{\unitaryLiteralPos{c}{.15cm},\leftSeq }{}}
      {\sequent{\leftSeq}{\rightSeq, \unitaryLiteralNeg{c}{.15cm}}}
       \hspace{1em}
       \prftree[r]{\(\scriptstyle \trueR\)}
       {\sequent{\leftSeq}{\rightSeq, \unitaryTrue{.05cm}}}

      \label{fig:rules-venn-literal}
    }

  \subfloat[]{
\begin{tabular}{c}
  \prftree[r]{\(\scriptstyle \sepL\)}
{\sequent{d_1,\leftSeq}{\rightSeq}}
{\sequent{d_2,\leftSeq}{\rightSeq}}
  {\sequent{ d,\leftSeq}{\rightSeq}}
  \hspace{1em}
\prftree[r]{\(\scriptstyle \sepR\)}
{\sequent{\leftSeq}{\rightSeq, d_1, d_2}}
{\sequent{\leftSeq}{\rightSeq, d}}
\end{tabular}
\label{fig:rules-venn-separation}
}

\subfloat[]{
    \begin{tabular}{c}    
      \prftree[r]{\(\scriptstyle \singDecL\)}
      {\sequent{ \unitaryLiteralPos{n_1}{.15cm}, \dots, \unitaryLiteralPos{n_k}{.15cm}, \unitaryLiteralNeg{o_1}{.15cm}, \dots, \unitaryLiteralNeg{o_l}{.15cm},\leftSeq}{\rightSeq }}
      {\sequent{d,\leftSeq}{\rightSeq}}
      \\[1em]
      \prftree[r]{\(\scriptstyle \singDecR\)}
      {\sequent{ \leftSeq}{\rightSeq, \unitaryLiteralPos{n_1}{.15cm}}}
            {\dots}
      {\sequent{ \leftSeq}{\rightSeq, \unitaryLiteralPos{n_k}{.15cm}}}
      {\sequent{   \leftSeq}{\rightSeq, \unitaryLiteralNeg{o_1}{.15cm} }}
      {\dots}
      {\sequent{\leftSeq}{\rightSeq, \unitaryLiteralNeg{o_l}{.15cm} }}
      {\sequent{\leftSeq}{\rightSeq, d}}\\[1em]
    \end{tabular}
\label{fig:rules-venn-decompose}
}

  \end{center}
\caption{Rules for Unitary Venn Diagrams}
\label{fig:rules-venn}
\end{figure}
\begin{lemma}
  \label{lem:rules-venn-sound}
  The rules shown in Fig.~\ref{fig:rules-venn} are sound.
\end{lemma}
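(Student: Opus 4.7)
The plan is to check each of the seven rules independently by unfolding the semantics of Venn diagrams and single zones (Def.~\ref{def:zone_sem}, Def.~\ref{def:venn-semantics}) and then reducing to basic facts about Heyting algebras: the lattice inequalities, the distributive law, and Lemma~\ref{lem:prop}. In every case I assume that the premises are valid (i.e.\ \(\val{\leftSeq'} \leq \val{\rightSeq'}\) in every Heyting algebra and under every valuation) and derive the analogous inequality for the conclusion.

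The structural rules \(\trueR\), \(\sepL\), \(\sepR\) and \(\singDecL\) are almost immediate. For \(\trueR\), \(\val{\unitaryTrue{.05cm}} = 1\), hence the right-hand side is \(1\). For \(\sepL\) and \(\sepR\), the hypothesis on the separation gives \(\val{d} = \hbigor_{z \in \ezones_1 \cup \ezones_2} \val{z} = \val{d_1} \hor \val{d_2}\); then \(\sepR\) is trivial from the premise, and \(\sepL\) follows from the distributivity of \(\hand\) over \(\hor\), since \((\val{d_1} \hor \val{d_2}) \hand \val{\leftSeq} = (\val{d_1} \hand \val{\leftSeq}) \hor (\val{d_2} \hand \val{\leftSeq})\) is bounded above by \(\val{\rightSeq}\) by the two premises. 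For \(\singDecL\), I observe that \(\val{d} = \val{z}\) already equals \(\hbigand_{i} \val{n_i} \hand \hbigand_{j} \hnot\val{o_j}\), which is exactly the meet of the semantics of the literals added to the antecedent in the premise.

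The literal rules \(\litL\) and \(\litR\) use \(\val{\unitaryLiteralNeg{c}{}} = \hnot\val{c}\) together with \(a \hand \hnot a \leq 0\), which is the instance \(c = a\), \(b = 0\) of Lemma~\ref{lem:prop}(\ref{eq:mp}). For \(\litL\), the premise yields \(\hnot\val{c} \hand \val{\leftSeq} \leq \val{c}\); combined with \(\hnot\val{c} \hand \val{\leftSeq} \leq \hnot\val{c}\) and modus ponens this gives \(\hnot\val{c} \hand \val{\leftSeq} \leq \val{c} \hand \hnot\val{c} \leq 0 \leq \val{\rightSeq}\). For \(\litR\), the premise says \(\val{c} \hand \val{\leftSeq} \leq 0\), which by the defining adjunction of \(\himplies\) is equivalent to \(\val{\leftSeq} \leq \hnot\val{c}\), and hence \(\val{\leftSeq} \leq \val{\rightSeq} \hor \hnot\val{c}\).

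The main obstacle is \(\singDecR\), which needs going from several premises of the form \(\val{\leftSeq} \leq \val{\rightSeq} \hor x_i\) (with \(x_i\) ranging over the \(\val{n_i}\) and the \(\hnot\val{o_j}\)) to \(\val{\leftSeq} \leq \val{\rightSeq} \hor \bigl(\hbigand_i \val{n_i} \hand \hbigand_j \hnot\val{o_j}\bigr)\). The key observation is that in a distributive lattice \(\val{\rightSeq} \hor (x \hand y) = (\val{\rightSeq} \hor x) \hand (\val{\rightSeq} \hor y)\), so iterating distributivity the right-hand side equals the meet \(\hbigand_i (\val{\rightSeq} \hor \val{n_i}) \hand \hbigand_j (\val{\rightSeq} \hor \hnot\val{o_j})\). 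Since \(\val{\leftSeq}\) is below each factor by the premises, it lies below their meet, which yields the conclusion. Combining all seven cases gives soundness of the rules in Fig.~\ref{fig:rules-venn}.
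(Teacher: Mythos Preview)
Your proof is correct and follows essentially the same approach as the paper: each rule is treated separately by unfolding the Venn-diagram semantics and appealing to distributivity, the Heyting adjunction, and Lemma~\ref{lem:prop}(\ref{eq:mp}) in exactly the places you indicate. The paper's argument for \(\singDecR\) is the same ``meet of the \(\val{\rightSeq}\hor x_i\)'' computation via distributivity that you spell out, and its treatment of \(\litL\), \(\litR\), \(\sepL\), \(\sepR\), \(\singDecL\) and \(\trueR\) matches yours step for step.
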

\begin{proof}
  In all of the following cases, let \(\valSing\) be an arbitrary valuation. The rule \(\trueR\) is clearly
  sound, since \(\val{\true} = 1\) for any valuation.
  For \(\litL\), assume \(\val{\unitaryLiteralNeg{c}{.15cm}} \hand \val{\leftSeq} \leq \val{\unitaryLiteralPos{c}{.15cm}}\).
  Then, we have \(\val{\unitaryLiteralNeg{c}{.15cm}} \hand \val{\leftSeq} = \val{\unitaryLiteralNeg{c}{.15cm}} \hand \val{\leftSeq}
  \hand \val{\unitaryLiteralNeg{c}{.15cm}} \hand \val{\leftSeq} \leq \val{\unitaryLiteralNeg{c}{.15cm}} \hand \val{\leftSeq} \hand
  \val{\unitaryLiteralPos{c}{.15cm}} \leq \val{\leftSeq} \hand 0 = 0 \leq \val{\rightSeq}\), where
  the first inequality is an application of the assumption, and the second is due to Lemma~\ref{lem:prop}~(\ref{eq:mp}).
  For \(\litR\), assume \(\val{\unitaryLiteralPos{c}{.15cm}} \hand \val{\leftSeq} \leq 0\). Then we
  get, by the definition of the implication, the lattice properties, and the semantics of literals,
  \(\val{\leftSeq} \leq \val{\unitaryLiteralPos{c}{.15cm}} \himplies 0  = \val{\unitaryLiteralNeg{c}{.15cm}} \leq \val{\rightSeq} \hor \val{\unitaryLiteralNeg{c}{.15cm}}\). 
  
  Consider \(\sepR\). Assume \(\val{\leftSeq} \leq \val{\rightSeq} \hor \val{d_1} \hor \val{d_2} \),
  we have in particular \(\val{\leftSeq} \leq \val{\rightSeq} \hor \hbigor_{z \in \ezones_1} \val{z}
  \hor \hbigor_{z \in \ezones_2} \val{z}\). Since
  \(\ezones_1  \cup \ezones_2  = \ezones\), and since we can ignore duplicate
  contour semantics by the lattice properties of Heyting algebras, 
  \(\val{\leftSeq} \leq \val{\rightSeq} \hor \hbigor_{z \in \ezones} \val{z}\), i.e., \(\val{\leftSeq} \leq \val{\rightSeq} \hor \val{d}\).
  Now consider \(\sepL\).
  We have both \(\val{d_1} \hand \val{\leftSeq} \leq \val{\rightSeq}\) and
  \(\val{d_2} \hand \val{\leftSeq} \leq \val{\rightSeq}\), i.e.,
  \begin{align*} 
  &(\hbigor_{z \in \ezones_1} \val{z} \hand \val{\leftSeq}) \hor (\hbigor_{z \in \ezones_2}\val{z} \hand \val{\leftSeq}) &&\leq \val{\rightSeq} \hor \val{\rightSeq}\\
\iff& ( \hbigor_{z \in \ezones_1} \val{z}  \hor \hbigor_{z \in \ezones_2} \val{z}) \hand \val{\leftSeq}
                                                                                                              &&\leq \val{\rightSeq}\\
    \iff & (\hbigor_{z \in \ezones} \val{z})  \hand \val{\leftSeq} && \leq{\val{\rightSeq}} 
  \end{align*}
which is exactly \(\val{d} \hand   \val{\leftSeq} \leq \val{\rightSeq}\).

Now consider \(\singDecL\). By  Def.~\ref{def:sequent}, the premiss denotes
\(\val{n_1} \hand \dots \hand \val{n_k} \hand -\val{o_1} \hand \dots \hand -\val{o_l} \hand \val{\leftSeq} \leq \val{\rightSeq}\).
But since \(z\) is the only shaded zone of \(d\),
this is exactly the semantics of \(\sequent{d, \leftSeq}{\rightSeq}\), by
Def.~\ref{def:zone_sem} and Def.~\ref{def:sequent}.
Finally, consider \(\singDecR\). Then, we have
\(\val{\leftSeq} \leq \val{\rightSeq} \hor \val{n_i}\) and
\(\val{\leftSeq} \leq \val{\rightSeq} \hor  -\val{o_j}\) for all
\(i \in \{1, \dots , k\}\) and \(j \in \{1, \dots, l\}\). By the
lattice properties, we get
\(\val{\leftSeq} \leq (\val{\rightSeq} \hor \val{n_1})\hand \dots \hand
(\val{\rightSeq} \hor \val{n_k}) \hand (\val{\rightSeq} \hor -\val{o_1})
\hand \dots \hand (\val{\rightSeq} \hor  -\val{o_l})\), which is, by
distributivity and since \(z\) is the only shaded zone in \(d\), the same as
\(\val{\leftSeq} \leq \val{\rightSeq} \hor \val{d}\). \qed
\end{proof}

\paragraph{Rules for pure Euler Diagrams.}
Now let \(d = (\contours, \zones)\) be a pure Euler diagram, where for each \(z \in \mzones{d}\) there
is a contour \(\ell \in \contours\), such that \(\adj{z}{\ell} \in \mzones{d}\).  Furthermore, let  \( \{c_1, \dots, c_k\} \subseteq \contours\)
 be the maximal set of contours such that \(\mzones{d \setminus c_i} \neq \emptyset\) for every \(i \leq k\).
 Then we can \emph{reduce} \(d\) according to the rules \(\reduceL\) and \(\reduceR\) shown in Fig.~\ref{fig:rules-decompose-missing}.
Let \(d= (\contours, \zones)\) be a pure Euler diagram with more than one missing zone, i.e., \(|\mzones{d}| > 1\),
and let \(d_1 = (\contours, \zones_1)\) and \(d_2= (\contours, \zones_2)\) be two pure Euler diagrams such that \(\zones_1 \cap \zones_2 = \zones\). Then
the rules \(\mzSepL\) and \(\mzSepR\) of Fig.~\ref{fig:rules-separate-imp} \emph{separate} the diagram \(z\) at its missing zones.
If \(d\) is a pure Euler diagram with a single missing zone, i.e. \(\mzones{d} = \{z\}\) and
\(z = ( \{n_1, \dots, n_k\},\{o_1, \dots, o_\ell\})\), then the rules of Fig.~\ref{fig:rules-decompose-imp}
decompose \(z\) into literals. 
\begin{figure}[h]
  \begin{center}
    \subfloat[]{
      \begin{tabular}{c}    
        \prftree[r]{\(\scriptstyle \reduceL\)}
        {\sequent{ d\setminus c_1, \dots, d\setminus c_k, \leftSeq}{\rightSeq}}
        {\sequent{d, \leftSeq}{\rightSeq}}
        \hspace{1em}
        \prftree[r]{\(\scriptstyle \reduceR\)}
        {\sequent{\leftSeq}{\rightSeq, d \setminus c_1}}
        {\!\dots\!}
        {\sequent{\leftSeq}{\rightSeq, d \setminus c_k}}  
        {\sequent{\leftSeq}{\rightSeq, d}}
      \end{tabular}
      \label{fig:rules-decompose-missing}
    }
    
    \subfloat[]{
      \begin{tabular}{c}
        \prftree[r]{\(\scriptstyle \mzSepL\)}
        {\sequent{d_1, d_2,\leftSeq}{\rightSeq}}
        {\sequent{ d,\leftSeq}{\rightSeq}}
        \hspace{1em}
        \prftree[r]{\(\scriptstyle \mzSepR\)}
        {\sequent{\leftSeq}{\rightSeq, d_1}}
        {\sequent{\leftSeq}{\rightSeq, d_2}}
        {\sequent{\leftSeq}{\rightSeq, d}}
      \end{tabular}
      \label{fig:rules-separate-imp}
    }
    
    \subfloat[]{
      \begin{tabular}{c}
        \prftree[r]{\(\scriptstyle \impDecL\) }
        {\sequent{d,\leftSeq}{\unitaryLiteralPos{n_1}{.15cm}}}
        {\dots}
        {\sequent{d,\leftSeq}{\unitaryLiteralPos{n_k}{.15cm}}}
        {\sequent{\unitaryLiteralPos{o_1}{.15cm},  \leftSeq}{\rightSeq}}
        {\dots}
        {\sequent{\unitaryLiteralPos{o_l}{.15cm},  \leftSeq }{\rightSeq}}
        {\sequent{d, \leftSeq}{\rightSeq}}\\[1em]
        \prftree[r]{\(\scriptstyle \impDecR\)}
        {\sequent{\leftSeq,\unitaryLiteralPos{n_1}{.15cm}, \dots, \unitaryLiteralPos{n_k}{.15cm}}{ \unitaryLiteralPos{o_1}{.15cm},\dots,\unitaryLiteralPos{o_l}{.15cm} }}
        {\sequent{\leftSeq}{\rightSeq,d}}
      \end{tabular}
      \label{fig:rules-decompose-imp}
    }
  \end{center}
  \caption{Proof Rules for pure Euler Diagrams}
  \label{fig:rules-euler-diagrams}
\end{figure}

\begin{lemma}
\label{lem:rules-euler-sound}
The rules shown in Fig.~\ref{fig:rules-euler-diagrams} are sound.
\end{lemma}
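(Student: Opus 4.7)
The plan is to fix an arbitrary Heyting algebra $\model$ and valuation $\valSing$ once, and check each of the six rules by a direct algebraic calculation. The groundwork has already been laid: Lemma~\ref{lem:sem-reduction} handles the reduction rules, Lemma~\ref{lem:prop} supplies the two identities about $\himplies$ used for the decomposition rules, and distributivity of the lattice handles the interaction between joins in the succedent and meets in the antecedent.

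For $\reduceL$ and $\reduceR$, the premiss of the rule has exactly the hypothesis needed to invoke Lemma~\ref{lem:sem-reduction}, so $\val{d} = \hbigand_{i=1}^{k} \val{d\setminus c_i}$. Soundness of $\reduceL$ follows because replacing $d$ by $d\setminus c_1, \dots, d\setminus c_k$ in the antecedent preserves the meet on the left-hand side. Soundness of $\reduceR$ uses the $k$ premisses $\val{\leftSeq} \leq \val{\rightSeq} \hor \val{d\setminus c_i}$ together with distributivity, $\hbigand_i (\val{\rightSeq} \hor \val{d\setminus c_i}) = \val{\rightSeq} \hor \hbigand_i \val{d\setminus c_i}$, to obtain $\val{\leftSeq} \leq \val{\rightSeq} \hor \val{d}$. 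For $\mzSepL$ and $\mzSepR$, I first observe that $\mzones{d_1}\cup\mzones{d_2} = \vennzones{\contours}\setminus(\zones_1\cap\zones_2) = \mzones{d}$, so by Def.~\ref{def:pure_euler} we have $\val{d} = \val{d_1} \hand \val{d_2}$; the rest is the same distributive argument.

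The rule $\impDecR$ is essentially the defining adjunction of $\himplies$: the premiss gives $\val{\leftSeq} \hand \val{n_1} \hand \cdots \hand \val{n_k} \leq \val{o_1} \hor \cdots \hor \val{o_l}$, which by Def.~\ref{def:heyting} is equivalent to $\val{\leftSeq} \leq \val{n_1}\hand\cdots\hand\val{n_k} \himplies \val{o_1}\hor\cdots\hor\val{o_l} = \msem{z} = \val{d}$, hence $\val{\leftSeq} \leq \val{\rightSeq} \hor \val{d}$.

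The main obstacle, and the only rule that requires any real bookkeeping, is $\impDecL$, because it mixes the two sides of the sequent. From the first $k$ premisses I obtain $\val{d} \hand \val{\leftSeq} \leq \val{n_1} \hand \cdots \hand \val{n_k}$, and combining this with $\val{d}\hand\val{\leftSeq} \leq \val{d} = \val{n_1}\hand\cdots\hand\val{n_k} \himplies \val{o_1}\hor\cdots\hor\val{o_l}$ via Lemma~\ref{lem:prop}(\ref{eq:mp}) yields $\val{d}\hand\val{\leftSeq} \leq \val{o_1}\hor\cdots\hor\val{o_l}$. The key move is then to absorb this back:
\begin{align*}
\val{d}\hand\val{\leftSeq} \;=\; \val{d}\hand\val{\leftSeq} \hand (\val{o_1}\hor\cdots\hor\val{o_l}) \;=\; \hbigor_{j=1}^{l} \bigl(\val{d}\hand\val{\leftSeq}\hand\val{o_j}\bigr),
\end{align*}
using distributivity. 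The remaining $l$ premisses give $\val{o_j}\hand\val{\leftSeq} \leq \val{\rightSeq}$ for each $j$, so the whole join is bounded by $\val{\rightSeq}$, yielding soundness. All the algebraic steps are routine once this decomposition is in place, and each case can be presented in a few lines along the lines already used in the proof of Lemma~\ref{lem:rules-venn-sound}.
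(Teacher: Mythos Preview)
Your proposal is correct and follows essentially the same route as the paper: invoking Lemma~\ref{lem:sem-reduction} for $\reduceL$ and $\reduceR$, the identity $\val{d_1}\hand\val{d_2}=\val{d}$ for $\mzSepL$ and $\mzSepR$, and the defining adjunction of $\himplies$ for $\impDecR$. For $\impDecL$ the paper merely gestures that the argument is ``similar to $\eqDecL$'', whereas you spell out the modus-ponens-then-distribute argument in full; this is the natural elaboration and is entirely in line with what the paper intends.
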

\begin{proof}
  The soundness of the rules \(\reduceL\) and \(\reduceR\) is immediate by Lemma~\ref{lem:sem-reduction}.  For rules \(\mzSepL\) and \(\mzSepR\) observe that by the condition on \(d_1\) and
  \(d_2\), we have \(\mzones{d_1} \cup \mzones{d_2} = \mzones{d}\). That is, \(\val{d_1} \hand \val{d_2} = \val{d}\) for
  all valuations and Heyting algebras. The soundness of both \(\mzSepL\) and \(\mzSepR\) follows
  by straightforward computations.
For the rule \(\impDecR\), the proof is straightforward by the definition
  of \(\himplies\) and the lattice properties.  The rule \(\impDecL\) can be proven sound
  similarly to  \(\eqDecL\). \qed
\end{proof}

\paragraph{Rules for Euler-Venn Diagrams.}
Let \(d\) be  an Euler-Venn diagram.
Then the rules \(\detL\)
 and \(\detR\) of Fig.~\ref{fig:rules-euler-venn-diagrams} \emph{detach}
 the  spatial relations from the shading.

\begin{figure}[h]
\begin{center}
  \begin{tabular}{cc}    
    \prftree[r]{\(\scriptstyle \detL\)}
    {\sequent{d, \leftSeq}{\euler{d}}}
    {\sequent{\venn{d},\leftSeq}{\rightSeq }}
    {\sequent{d,\leftSeq}{\rightSeq}} &\hspace{2em}
    \prftree[r]{\(\scriptstyle \detR\)}
    {\sequent{\euler{d}, \leftSeq}{ \venn{d}}}
    {\sequent{\leftSeq}{\rightSeq, d}}
  \end{tabular}
\end{center}
\caption{Proof Rules For Euler-Venn Diagrams}
\label{fig:rules-euler-venn-diagrams}
\end{figure}

\begin{lemma}
  \label{lem:rules-euler-venn-sound}
  The rules shown in  Fig.~\ref{fig:rules-euler-venn-diagrams} are sound.
\end{lemma}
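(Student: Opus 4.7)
The plan is to unfold the definition $\val{d} = \val{\euler{d}} \himplies \val{\venn{d}}$ from Def.~\ref{def:euler-venn-semantics} and then use the basic Heyting algebra facts from Lemma~\ref{lem:prop}, in particular modus ponens (\ref{eq:mp}), together with the defining adjunction of $\himplies$.

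For $\detR$, fix an arbitrary valuation $\valSing$ and assume the premiss $\val{\euler{d}} \hand \val{\leftSeq} \leq \val{\venn{d}}$. By the defining adjunction of the implication in Def.~\ref{def:heyting}, this is equivalent to $\val{\leftSeq} \leq \val{\euler{d}} \himplies \val{\venn{d}} = \val{d}$. Since $\val{d} \leq \val{\rightSeq} \hor \val{d}$ in any lattice, we conclude $\val{\leftSeq} \leq \val{\rightSeq} \hor \val{d}$, which is exactly the semantics of the conclusion sequent.

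For $\detL$, assume the two premisses $\val{d} \hand \val{\leftSeq} \leq \val{\euler{d}}$ and $\val{\venn{d}} \hand \val{\leftSeq} \leq \val{\rightSeq}$. The key step is to combine the first premiss with Lemma~\ref{lem:prop}~(\ref{eq:mp}) applied to $\val{d} = \val{\euler{d}} \himplies \val{\venn{d}}$: we have $\val{\euler{d}} \hand \val{d} \leq \val{\venn{d}}$, hence, using the first premiss and the idempotence of meet,
\begin{align*}
\val{d} \hand \val{\leftSeq} = \val{d} \hand \val{d} \hand \val{\leftSeq} \leq \val{d} \hand \val{\euler{d}} \leq \val{\venn{d}}.
\end{align*}
Meeting once more with $\val{\leftSeq}$ gives $\val{d} \hand \val{\leftSeq} \leq \val{\venn{d}} \hand \val{\leftSeq}$, and the second premiss then yields $\val{d} \hand \val{\leftSeq} \leq \val{\rightSeq}$, as required.

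There is no real obstacle here; both cases reduce to one use of the $\hand$/$\himplies$ adjunction for $\detR$ and one use of modus ponens chained with the second premiss for $\detL$. The only thing to be slightly careful about is that the sequent interpretation conflates the multiset $\leftSeq$ with its meet (and similarly $\rightSeq$ with its join) as fixed in Def.~\ref{def:sequent}, so that manipulations like $\val{\leftSeq} = \val{\leftSeq} \hand \val{\leftSeq}$ and $\val{d} \leq \val{\rightSeq} \hor \val{d}$ are legitimate lattice steps.
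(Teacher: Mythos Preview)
Your proof is correct and follows essentially the same approach as the paper: for \(\detR\) you use the \(\hand\)/\(\himplies\) adjunction and for \(\detL\) you chain the first premiss with modus ponens (Lemma~\ref{lem:prop}~(\ref{eq:mp})) and then the second premiss. The only cosmetic difference is that the paper keeps \(\val{\leftSeq}\) in the chain throughout, whereas you drop it and then re-meet with it; both are equally valid lattice manipulations.
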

\begin{proof}
  Consider \(\detR\), and assume \(\val{\euler{d}} \hand \val{\leftSeq} \leq \val{\venn{d}}\). Then,
  by Def.~\ref{def:heyting}, this is equivalent to \(\val{\leftSeq} \leq \val{\euler{d}} \himplies \val{\venn{d}}\),
  which by   Def.~\ref{def:euler-venn-semantics} and the lattice properties 
  implies \(\val{\leftSeq} \leq  \val{\rightSeq} \hor \val{d}\). So consider \(\detL\), and assume both
  \(\val{d} \hand \val{\leftSeq} \leq \val{\euler{d}}\) and \(\val{\venn{d}} \hand \val{\leftSeq} \leq \val{\rightSeq}\).
  We then have \(\val{d} \hand \val{\leftSeq} =
  \val{d} \hand \val{\leftSeq} \hand \val{d} \hand \val{\leftSeq} \leq
  \val{d} \hand \val{\leftSeq} \hand \val{\euler{d}} \leq
  \val{\venn{d}} \hand \val{\leftSeq} \leq \val{\rightSeq}\).
  The inequalities are correct  due to the first premiss,  Lemma~\ref{lem:prop}~(\ref{eq:mp}) and the second premiss, respectively. \qed
\end{proof}

By an induction on the height of proofs,   we 
get the 
soundness theorem for \(\diagName\), using Lemma~\ref{lem:rules-boolean-sound}, \ref{lem:rules-venn-sound}, \ref{lem:rules-euler-sound}, and~\ref{lem:rules-euler-venn-sound}.

\begin{theorem}[Soundness]
  \label{thm:soundess}
  If \(\sequent{\leftSeq}{\rightSeq}\) is provable in \(\diagName\), then \(\sequent{\leftSeq}{\rightSeq}\) is valid.
\end{theorem}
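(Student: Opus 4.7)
The plan is to proceed by induction on the height $n$ of a proof of $\sequent{\leftSeq}{\rightSeq}$ in $\diagName$, exactly as suggested after Lemma~\ref{lem:rules-euler-venn-sound}. The four soundness lemmas already cover every inference rule of the calculus, so the main work is the bookkeeping of base cases and the mechanical application of these lemmas in the step case.

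For the base case $n=0$, the proof consists of a single leaf, which must be an axiom $\sequent{p,\leftSeq}{\rightSeq,p}$ with $p$ a positive literal, or an instance of $\falseL$ or $\trueR$. For axioms, I would use the fact that in any Heyting algebra we have $x \hand y \leq x$ for any $x,y$, so $\val{p} \hand \val{\leftSeq} \leq \val{p} \leq \val{\rightSeq} \hor \val{p}$. For $\falseL$, validity is immediate from $\val{\bot}=0$ and the absorbing behaviour of $0$ under meet. For $\trueR$, validity follows from $\val{\top}=1$ and the fact that $x \leq 1 \leq y \hor 1$ for any $x,y$ in a Heyting algebra.

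For the inductive step, assume the claim holds for all proofs of height strictly less than $n$, and let $\pi$ be a proof of $\sequent{\leftSeq}{\rightSeq}$ of height $n \geq 1$. The last rule $r$ applied in $\pi$ has premisses whose subproofs are of height at most $n-1$, so by the induction hypothesis each premiss is a valid sequent. I would then distinguish cases on the shape of $r$: if $r$ is one of the sentential rules of Fig.~\ref{fig:rules-boolean}, then validity of the conclusion follows from Lemma~\ref{lem:rules-boolean-sound}; if $r$ is one of $\litL$, $\litR$, $\trueR$, $\sepL$, $\sepR$, $\singDecL$ or $\singDecR$, I would invoke Lemma~\ref{lem:rules-venn-sound}; if $r$ is one of $\reduceL$, $\reduceR$, $\mzSepL$, $\mzSepR$, $\impDecL$ or $\impDecR$, I would invoke Lemma~\ref{lem:rules-euler-sound}; and finally, if $r$ is $\detL$ or $\detR$, I would invoke Lemma~\ref{lem:rules-euler-venn-sound}. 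In each case, the soundness lemma states exactly that validity of the premisses implies validity of the conclusion, which closes the induction.

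There is no real obstacle here: the hard work has already been done in the individual soundness lemmas, where the Heyting-algebraic identities of Lemma~\ref{lem:prop} together with the lattice axioms were used to verify each rule locally. The only point that requires mild care is to make sure the case analysis on $r$ is exhaustive, so I would explicitly enumerate the rule schemes of Figs.~\ref{fig:rules-boolean}--\ref{fig:rules-euler-venn-diagrams} and pair each with the lemma responsible for its validity.
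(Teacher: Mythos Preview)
Your proposal is correct and follows exactly the route the paper takes: an induction on the height of proofs, invoking Lemmas~\ref{lem:rules-boolean-sound}, \ref{lem:rules-venn-sound}, \ref{lem:rules-euler-sound}, and~\ref{lem:rules-euler-venn-sound} for the inductive step. You have simply spelled out the base cases and the case split in more detail than the paper does.
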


To prove completeness of the system, we first show that certain rules
are invertible. Even stronger, 
a rule is \emph{height-preserving invertible}, if whenever we have a proof of
height \(n\) for its conclusion, its premisses are provable with a proof of at most height \(n\).

\begin{lemma}[Inversions]
  \label{lem:inversions}
 \begin{enumerate}
\item All of the rules   \(\landL\), \(\landR\), \(\lorL\) and \(\lorR\) are height-preserving invertible. \label{inv:bool-rules}
 \item  All of the rules  \(\singDecL\), \(\singDecR\), \(\sepL\),  \(\sepR\), \(\reduceL\), \(\reduceR\), \(\mzSepL\), and \(\mzSepR\)  are height-preserving invertible. \label{inv:imp-rules}
 \item If \(\provable_n \sequent{d,\leftSeq}{\rightSeq}\) for an Euler-Venn diagram \(d\), then
   also \(\provable_n \sequent{\venn{d}, \leftSeq}{\rightSeq}\).\label{inv:ldet}
 \item If \(\provable_n \sequent{d, \leftSeq}{\rightSeq}\) for a pure Euler diagram with one missing zone \(z = (\{n_1, \dots, n_k\}, \{o_1, \dots, o_l\})\), then also \(\provable_n 
        \sequent{\unitaryLiteralPos{o_i}{.15cm},  \leftSeq}{\rightSeq}\) for all \(1 \leq i \leq l\).\label{inv:lidec}
  \end{enumerate}
\end{lemma}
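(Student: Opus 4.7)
The plan is to prove all four parts by simultaneous induction on the height $n$ of the given proof, following the standard strategy for height-preserving invertibility in contraction-free intuitionistic sequent calculi (cf.~Negri et al.\ \cite{Negri2001}). Simultaneous induction is required because, during the inductive step, invoking the inversion on a subproof of a different rule may involve a different clause of the lemma than the one being established.

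For the base case $n = 0$, the proof consists of a single axiom, a \(\falseL\), or a \(\trueR\) instance. In each clause, the principal diagram being decomposed is never itself the literal \(p\) appearing on both sides of an axiom, nor \(\bot\), nor \(\top\): for parts 1 and 2 it is a compound or a non-literal unitary diagram; for parts 3 and 4 it is an Euler-Venn or pure Euler diagram with non-empty missing zones, which is neither a positive literal, nor \(\bot\), nor \(\top\). Hence the sequents claimed to be derivable remain axioms (or \(\falseL\)/\(\trueR\) instances) with the principal diagram appropriately replaced, and they have height \(0\).

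For the inductive step, I would case split on the last rule \(R\) applied in the given proof. If \(R\) is precisely the rule being inverted \emph{and} its principal diagram is the one we are inverting on, then the corresponding premiss of \(R\) is exactly the sequent we need, and it has height \(\leq n - 1 \leq n\). For parts 3 and 4 this handles the asymmetry of the claim: the rule \(\detL\) has \(\sequent{\venn{d}, \leftSeq}{\rightSeq}\) literally as its second premiss, and \(\impDecL\) has \(\sequent{\unitaryLiteralPos{o_i}{.15cm}, \leftSeq}{\rightSeq}\) among its premisses for every \(i\), so the required sequent can be extracted directly. If instead \(R\) is any other rule, or the principal diagram of \(R\) is different from the one we are inverting on, then the target diagram is a side-diagram that persists unchanged in each premiss of \(R\). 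I apply the appropriate part of the induction hypothesis to each premiss (all of height \(\leq n - 1\)) and then reapply \(R\) to recover a proof of the intended sequent of height \(\leq n\).

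The main obstacle is the breadth of the case analysis rather than its depth: one must check for every pair of rules that the rule being inverted genuinely commutes with the other, and that no rule acts on the principal diagram in a way not covered above. This is mitigated by two observations. First, the rule sets for Venn, pure Euler, and Euler-Venn diagrams are effectively disjoint on their intended classes of principal diagrams, so for a genuinely Euler-Venn \(d\) (as in part 3) only \(\detL\) can act on \(d\) on the left; similarly for a pure Euler diagram with a single missing zone in part 4, only the rules of Fig.~\ref{fig:rules-euler-diagrams} apply. Second, all of the rules in Fig.~\ref{fig:rules-boolean}, \ref{fig:rules-venn}, \ref{fig:rules-euler-diagrams} and \ref{fig:rules-euler-venn-diagrams} preserve the antecedent and succedent context \(\leftSeq, \rightSeq\) in all premisses, so applying the IH does not interfere with side-diagrams. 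With these remarks in place, each subcase is a routine reapplication of \(R\) after invoking the inductive hypothesis.
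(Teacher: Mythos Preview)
Your overall strategy is correct and matches the paper's own proof, which simply appeals to the standard height-induction of Negri et al.\ and notes that parts~3 and~4 go like the \(\impliesR\) case. Your more detailed unpacking of that induction is sound.

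One justification you give is wrong, though, and should be fixed. You write that ``all of the rules in Fig.~\ref{fig:rules-boolean}, \ref{fig:rules-venn}, \ref{fig:rules-euler-diagrams} and \ref{fig:rules-euler-venn-diagrams} preserve the antecedent and succedent context \(\leftSeq, \rightSeq\) in all premisses''. This is false: \(\impliesR\), \(\litL\), \(\litR\), \(\impDecR\), \(\detR\), and the left premisses of \(\impliesL\), \(\detL\), \(\impDecL\) all discard the succedent context. Fortunately this does not break the argument. For parts~3 and~4, and for the left rules in parts~1 and~2, the diagram being inverted sits in the \emph{antecedent}, and the antecedent \(\leftSeq\) \emph{is} carried into every premiss of every rule, so the IH applies directly. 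For the right rules in parts~1 and~2, when the last rule \(R\) has a restricted succedent in its premiss (so the diagram you are inverting on has vanished), you do not need the IH at all: you simply reapply \(R\) with a different choice of the free succedent multiset \(\rightSeq\) in its conclusion, replacing the original right-side diagram by its decomposed pieces. This is exactly the Negri--von Plato manoeuvre for multi-succedent intuitionistic calculi. You should state this explicitly rather than relying on the incorrect blanket context-preservation claim.
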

\begin{proof}
  The propositional operator rules are height-preserving invertible as
  shown by Negri et al.~\cite{Negri2001} (Chap. 5, Lemma 5.3.4).
  For the rules \(\singDecL\), \(\singDecR\), \(\sepL\), \(\sepR\), \(\reduceL\),
  \(\reduceR\), \(\mzSepL\) and \(\mzSepR\), similar arguments during
  an induction on the height of the proof yield the result. Case \ref{inv:ldet} and \ref{inv:lidec} can
  be shown by an induction similar to the case of \(\impliesR\). \qed 
\end{proof}

That these rules can be used in an inverse manner is used in the following
lemma, where we connect provability of a sequent \(\sequent{\leftSeq}{\rightSeq}\) within \(\diagName\)
with the provability of the corresponding sequent \(\sequent{\canon{\leftSeq}{}}{\canon{\rightSeq}{}}\)
consisting of the canonical formulas of the antecedent and the succedent. 

\begin{lemma}
  \label{lem:canonical-formula-derive}
  Let \(\sequent{\leftSeq}{\rightSeq}\) be a sequent of compound diagrams. Then \(\sequent{\leftSeq}{\rightSeq}\)
  is provable in \(\diagName\) if, and only if, \(\sequent{\canon{\leftSeq}{}}{\canon{\rightSeq}{}}\) is
  provable in \(\sentName\).
\end{lemma}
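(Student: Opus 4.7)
The plan is to prove both directions by induction on proof height, exploiting that the canonical formula construction commutes with compound operators (so the compound rules of $\diagName$ and $\sentName$ coincide on the nose) and that every diagrammatic rule acting on a unitary diagram has a short mirroring derivation in the sentential calculus, whose structure follows the structure of the canonical formula.

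For the forward direction ($\diagName \Rightarrow \sentName$), I would induct on the height of the $\diagName$-proof and do a case analysis on the last rule applied. The compound rules translate one-to-one since $\canon{D \land E}{} = \canon{D}{} \land \canon{E}{}$ and similarly for $\lor, \implies$, and diagrammatic axioms on positive literals become $\sentName$-axioms on the corresponding variables. For each unitary rule I construct a short $\sentName$-derivation: $\sepL/\sepR$ correspond to $\lorL/\lorR$ over the disjunctive canonical form $\canon{d}{} = \biglor_{z \in \ezonesOf{d}} \canon{z}{z}$; $\singDecL/\singDecR$ correspond to iterated $\landL/\landR$; $\litL, \litR, \trueR$ unfold $\canon{\unitaryLiteralNeg{c}{.15cm}}{} = c \implies \false$ and apply $\impliesL/\impliesR$ together with $\falseL$; $\mzSepL/\mzSepR$ correspond to $\landL/\landR$ on the conjunctive canonical form for pure Euler diagrams; $\impDecL/\impDecR$ are direct applications of $\impliesL/\impliesR$ on the single missing zone; and $\detL/\detR$ become $\impliesL/\impliesR$ because $\canon{d}{} = \canon{\euler{d}}{} \implies \canon{\venn{d}}{}$.

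For the backward direction ($\sentName \Rightarrow \diagName$), my strategy is to introduce, for every intuitionistic formula $\formula$, a compound diagram $\widehat{\formula}$ obtained by replacing each propositional variable in $\formula$ by its positive literal, so that $\canon{\widehat{\formula}}{} = \formula$ up to the simplifications of Remark~\ref{rem:simplified-canonical-literals}. The key step is then an \emph{exchange lemma}: in $\diagName$, for any diagram $D$, the sequents $\sequent{D,\leftSeq}{\rightSeq}$ and $\sequent{\widehat{\canon{D}{}},\leftSeq}{\rightSeq}$ are interderivable, and symmetrically in the succedent. This is proved by induction on $D$, peeling off the canonical form one operator at a time using the decomposition rules ($\singDecL/\singDecR$, $\sepL/\sepR$, $\mzSepL/\mzSepR$, $\impDecL/\impDecR$, $\reduceL/\reduceR$, $\detL/\detR$) together with their height-preserving invertibility from Lemma~\ref{lem:inversions}. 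Granting the exchange lemma, any $\sentName$-proof of $\sequent{\canon{\leftSeq}{}}{\canon{\rightSeq}{}}$ reads literally as a $\diagName$-proof of $\sequent{\widehat{\canon{\leftSeq}{}}}{\widehat{\canon{\rightSeq}{}}}$ (since the compound rules agree), which the exchange lemma then transports back to $\sequent{\leftSeq}{\rightSeq}$.

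The main obstacle is the pair $\reduceL/\reduceR$, because reduction changes the contour set and so alters the shape of the canonical formula in a non-local way. In the forward direction the simulating $\sentName$-derivation must rebuild, at the syntactic level, the algebraic argument of Lemma~\ref{lem:sem-reduction} relating $\canon{z \setminus c}{m}$ to $\canon{z}{m}$ and $\canon{\adj{z}{c}}{m}$, using $\impliesL/\impliesR$, $\landL/\landR$, $\lorR$, and the identities of Lemma~\ref{lem:prop}. In the backward direction the corresponding instance of the exchange lemma is the most bookkeeping-heavy case. Every other case then follows the uniform pattern of mirroring the rule with the matching structure of the canonical form.
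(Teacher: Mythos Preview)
Your proposal is viable but takes a substantially harder route than the paper, especially in the forward direction. The paper does not simulate the \(\diagName\)-rules in \(\sentName\) at all: it dispatches the forward implication in one line by invoking soundness of \(\diagName\) (already proved as Theorem~\ref{thm:soundess}) and completeness of \(\sentName\) (Remark~\ref{rem:props-sentential-calculus}). This completely sidesteps the obstacle you identify with \(\reduceL/\reduceR\); you would otherwise have to reproduce the content of Lemma~\ref{lem:sem-reduction} as an explicit \(\sentName\)-derivation, which is doable but tedious and buys nothing here.

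For the backward direction, the paper also differs from your exchange-lemma strategy: it argues directly by induction on the height of the \(\sentName\)-proof, doing a case split on the last rule applied and, for each case, identifying which diagrams \(D\) can have a canonical formula of the required shape, then using the height-preserving invertibility of the compound rules (\(\landL,\landR,\lorL,\lorR\)) to flatten the premisses before applying the induction hypothesis and the matching diagrammatic rule (\(\detR\), \(\impDecR\), \(\litR\), \(\trueR\), \(\mzSepR\), \(\singDecR\), \(\sepR\), and their left counterparts). Your exchange-lemma route would also work, but note a gap in your sketch: Lemma~\ref{lem:inversions} does \emph{not} give full invertibility for \(\litL\), \(\litR\), \(\impDecL\), \(\impDecR\), \(\detR\) (only the partial statements in items~\ref{inv:ldet} and~\ref{inv:lidec}), so ``peeling off via invertibility'' does not cover negative literals, single-missing-zone Euler diagrams on the right, or Euler-Venn diagrams on the right. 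You would need a separate proof-height induction for those cases, matching \(\litL\) against \(\impliesL\), \(\impDecR\) against \(\impliesR\), etc. The paper's direct induction avoids this bookkeeping by never having to replace a diagram by its canonical compound form inside an existing proof.
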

\begin{proof}
  Let \(\sequent{\leftSeq}{\rightSeq}\) be provable in \(\diagName\). By Theorem~\ref{thm:soundess},
  the sequent is valid, and hence the sequent \(\sequent{\canon{\leftSeq}{}}{\canon{\rightSeq}{}}\)
  is valid as well. Since \(\sentName\) is complete (cf. Remark~\ref{rem:props-sentential-calculus}),
  the sequent is provable
  in \(\sentName\).

  For the other direction, we proceed by induction on the height \(n\) of the proof of
  \(\sequent{\canon{\leftSeq}{}}{\canon{\rightSeq}{}}\). If \(n = 0\), then
  \(\sequent{\canon{\leftSeq}{}}{\canon{\rightSeq}{}}\) is an axiom
  \(\sequent{p, \leftSeq^\prime}{\rightSeq^\prime, p}\) or an instance of
  \(\falseL\). In the first case, since the only diagram \(D\) with \(\canon{D}{} = p\) is
  a positive literal,  \(\sequent{\leftSeq}{\rightSeq}\) is an axiom as well. Similarly,
  in the second case, it is an instance of \(\falseL\) of \(\diagName\).
  Now assume that the statement is true for all sequents with proofs of height
  less than \(n\). We proceed by a case distinction on the last rule applied
  in the proof of \(\sequent{\canon{\leftSeq}{}}{\canon{\rightSeq}{}}\).

  If the last rule is \(\impliesR\), then the sequent is of the form
  \(\sequent{\canon{\leftSeq}{}}{\canon{\rightSeq^\prime}{}, \canon{D}{}}\),
  where \(D\) is either a compound diagram
  \(D = E \implies F\), a pure Euler diagram \(D = d_e\) with a single missing zone,
  an  Euler-Venn diagram with missing zones and shaded zones \(D = d\), a single negative literal
  for a contour \(c\), or \(D = \true\).  
  In the first case, the premiss is then \(\sequent{\canon{E}{},\canon{\leftSeq}{}}{\canon{F}{}}\),
  which by the induction hypothesis implies that \(\sequent{ E, \leftSeq}{F}\) is
  provable in \(\diagName\). An application of \(\impliesR\) then proves
  \(\sequent{\leftSeq}{\rightSeq}\). Since all cases, where the principal diagram
   is compound are treated exactly like this, we will ignore
  these possibilities in the following. For the case where \(d\) is an Euler-Venn
   diagram, we have \(\canon{d}{} = \euler{d} \implies \venn{d}\).
   and hence the premiss of the last step is
   \(\sequent{\canon{\euler{d}}{}, \canon{\leftSeq}{}}{\canon{\venn{d}}{}}\).
  By the induction hypothesis, we get that
  \(\sequent{\euler{d}, \leftSeq}{\venn{d}}\) is provable, and by applying
  \(\detR\), \(\sequent{\leftSeq}{\rightSeq,d}\) as well. Now assume that
  the principal diagram is a pure Euler diagram \(d_e\) with a single
  missing zone \(z = (\{n_1, \dots, n_k\},\{o_1, \dots, o_l\})\). Hence, the premiss of the last step in \(\sentName\)
  is \(\sequent{\bigland_{1 \leq i \leq k} n_i, \canon{\leftSeq}{}}{\biglor_{1 \leq i \leq l} o_i}\).
  Since both \(\landL\) and \(\lorR\) are height-preserving invertible, the provability of this
  sequent is equivalent to the provability of \(\sequent{n_1, \dots, n_k, \canon{\leftSeq}{}}{o_1, \dots, o_l}\),
  with height less than \(n\).
  Since the canonical formula is only atomic for diagram literals, we have
  that \(\sequent{\unitaryLiteralPos{n_1}{.15cm}, \dots, \unitaryLiteralPos{n_k}{.15cm}, \leftSeq}{
    \unitaryLiteralPos{o_1}{.15cm}, \dots, \unitaryLiteralPos{o_l}{.15cm}}\) is provable by the induction
  hypothesis, and
  hence by applying \(\impDecR\) also \(\sequent{\leftSeq}{\rightSeq, d_e}\). If
  the principal formula was a negative literal for \(c\), then the proven sequent
  is of the form \(\sequent{\canon{\leftSeq}{}}{\canon{\rightSeq}{}, \canon{\unitaryLiteralNeg{c}{.15cm}}{}}\).
  Since \(\canon{\unitaryLiteralNeg{c}{.15cm}}{} = -c = c \implies \false\),
  the premiss is \(\sequent{c, \canon{\leftSeq}{}}{\false}\), which is exactly
  \(\sequent{\canon{\unitaryLiteralPos{c}{.15cm}}{}, \canon{\leftSeq}{}}{\false}{}\). By
  induction hypothesis, we get a proof for \(\sequent{\unitaryLiteralPos{c}{.15cm}, \leftSeq}{}\)
  in \(\diagName\). Thus an application of \(\litR\) yields a proof for
  \(\sequent{\leftSeq}{\rightSeq,\unitaryLiteralNeg{c}{.15cm}}\). Finally,
  if the principal formula  was \(\true\), then \(\canon{D}{} = \true\), and
  an application of \(\trueR\) yields a proof for \(\sequent{\leftSeq}{\rightSeq^\prime,D}\).
  Observe that \(\sequent{\canon{\leftSeq}{}}{\canon{\rightSeq^\prime}{},\canon{\true}{}}\)
  is also provable since the premiss of applying \(\impliesR\) is an instance of
  \(\falseL\).

  If the last application in the proof of \(\sequent{\canon{\leftSeq}{}}{\canon{\rightSeq}{}}\)
  was \(\impliesL\), the arguments are similar, with appropriate applications of
  \(\detL\), \(\impDecL\), \(\litL\), and the invertibility of \(\landR\) and \(\lorL\).

  If the last application was \(\landR\), then the last sequent is of
  the form \(\sequent{\canon{\leftSeq}{}}{\canon{\rightSeq^\prime}{}, \canon{D}{}}\),
  where either \(D = d_e\) is an Euler diagram with more than one missing
  zone, or \(D = d\) is a Venn diagram with exactly one shaded zone. In the first
  case, this means \(\sequent{\canon{\leftSeq}{}}{\canon{\rightSeq^\prime}{}, \bigland_{z^\prime \in \mzones{d_e}}\canon{z^\prime}{m}}\) was proved, and the premisses are
  \(\sequent{\canon{\leftSeq}{}}{\canon{\rightSeq^\prime}{}, \canon{z}{m}}\) and
  \(\sequent{\canon{\leftSeq}{}}{\canon{\rightSeq^\prime}{}, \bigland_{z^\prime \in \mzones{d_e}\setminus \{z\}}\canon{z^\prime}{m}}\) for some \(z \in \mzones{d_e}\). Now consider the
  Euler diagrams \(d_1 = (\contours, \vennzones{\contours} \setminus \{z\})\)
  and \(d_2 = (\contours, (\vennzones{\contours} \setminus \mzones{d}) \cup \{z\})\).
  Then \(\canon{d_1}{} = \canon{z}{m}\) and \(\canon{d_2}{} =  \bigland_{z^\prime \in \mzones{d_e}\setminus \{z\}}\canon{z^\prime}{m}\). Hence, we get by the induction hypothesis that
  \(\sequent{\leftSeq}{\rightSeq^\prime, d_1}\) and \(\sequent{\leftSeq}{\rightSeq^\prime, d_2}\)
  are provable, and thus an application of \(\mzSepR\) yields a proof of
  \(\sequent{\leftSeq}{\rightSeq}\). For the second case, assume \(D=d\) is a
  Venn diagram with exactly one shaded zone \(z = (\{n_1, \dots, n_k\},\{o_1, \dots, o_l\})\), i.e., the
  sequent is in the form \(\sequent{\canon{\leftSeq}{}}{\canon{\rightSeq^\prime}{}, \bigland_{1 \leq i \leq k} n_i \land \bigland_{1 \leq i \leq l} -o_i}\).
  Assume without loss
  of generality that \(n_1\) is part of the outer conjunction, i.e., the conjunction in the succedent is of the form
  \(n_1 \land \left(\bigland_{2 \leq i \leq k} n_i \land \bigland_{1 \leq i \leq l} -o_i\right) \). Hence,
  the premisses are of the form
  \(\sequent{\canon{\leftSeq}{}}{\canon{\rightSeq^\prime}{}, n_1}\) and
  \(\sequent{\canon{\leftSeq}{}}{\canon{\rightSeq^\prime}{}, \bigland_{2 \leq i \leq k} n_i \land \bigland_{1 \leq i \leq l} -o_i}\).
  Since \(\landR\) is height-preserving invertible,
  all sequents of the form \(\sequent{\canon{\leftSeq}{}}{\canon{\rightSeq^\prime}{}, n_i}\)
  and  \(\sequent{\canon{\leftSeq}{}}{\canon{\rightSeq^\prime}{}, -o_i}\)
  are provable with a proof of height less than \(n\).
  From the induction hypothesis, and Remark~\ref{rem:simplified-canonical-literals}, we get that all of the sequents \(\sequent{\leftSeq}{\rightSeq^\prime, \unitaryLiteralPos{n_i}{.15cm}}\) and
  \(\sequent{\leftSeq}{\rightSeq^\prime, \unitaryLiteralNeg{o_i}{.15cm}}\) are provable, and
  hence \(\sequent{\leftSeq}{\rightSeq}\) is provable with an application of \(\singDecR\).

  If the last rule applied in the proof is \(\landL\), the arguments are similar, with
  suited applications of \(\mzSepL\) and \(\singDecL\).

  Now, assume that the last rule applied was \(\lorR\). Then, the only possibility
  is that the principal diagram is a Venn diagram with more than one shaded zone,
  i.e.,   the sequent is \(\sequent{\canon{\leftSeq}{}}{\canon{\rightSeq^\prime}{}, \biglor_{z \in \ezonesOf{d}}\canon{z}{z}}\). So without loss of generality assume that the
  premiss is \(\sequent{\canon{\leftSeq}{}}{\canon{\rightSeq^\prime}{}, \canon{z_i}{z}, \biglor_{z \in \ezonesOf{d}\setminus \{z_i\}}\canon{z}{z}}\). Consider the Venn diagrams
  \(d_1 = (\contours, \zones{d}, \{z_i\})\) and
  \(d_2 = (\contours, \zones{d}, \ezones{d} \setminus \{z_i\})\), and observe
  that \(\canon{d_1}{} = \canon{z_i}{z}\) and
  \(\canon{d_2}{} = \biglor_{z \in \ezonesOf{d}\setminus \{z_i\}}\canon{z}{z}\). That is, by the
  induction hypothesis, we have that \(\sequent{\leftSeq}{\rightSeq^\prime, d_1,d_2}\) is
  provable, and hence by an application of \(\eqDecR\), we can prove
  \(\sequent{\leftSeq}{\rightSeq}\).

  The case for \(\lorL\) is similar, with an appropriate application of
  \(\eqDecL\). \qed
\end{proof}

Since every valid sequent is derivable in \(\sentName\), we get the completeness
result for \(\diagName\) directly from Lemma~\ref{lem:canonical-formula-derive}.

\begin{theorem}[Completeness]
  If \(\sequent{\leftSeq}{\rightSeq}\) is  valid, then \(\sequent{\leftSeq}{\rightSeq}\) is provable.
\end{theorem}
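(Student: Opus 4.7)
The plan is to chain together three facts that have already been established in the paper, using the canonical formula translation as the bridge between the diagrammatic calculus $\diagName$ and the sentential calculus $\sentName$. The crucial semantic observation underpinning everything is that the canonical formula construction preserves meaning, i.e., $\val{D} = \val{\canon{D}{}}$ for any compound diagram $D$ under every Heyting algebra $\model$ and valuation $\valSing$. This is immediate by comparing Def.~\ref{def:canonical} clause-by-clause against Def.~\ref{def:zone_sem}, Def.~\ref{def:venn-semantics}, Def.~\ref{def:pure_euler}, Def.~\ref{def:euler-venn-semantics} and Def.~\ref{def:compound-semantics}; each syntactic connective in the canonical formula is chosen precisely to match the corresponding semantic operation, and the abbreviation $-c \equiv c \implies \false$ matches $\hnot \val{c} = \val{c} \himplies 0$.

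Given this, I would argue as follows. Suppose $\sequent{\leftSeq}{\rightSeq}$ is valid, i.e., $\val{\leftSeq} \leq \val{\rightSeq}$ in every Heyting algebra under every valuation. By semantic preservation applied termwise, the sentential sequent $\sequent{\canon{\leftSeq}{}}{\canon{\rightSeq}{}}$ is valid as well. Since $\sentName$ is complete for propositional intuitionistic logic (Remark~\ref{rem:props-sentential-calculus}), the sentential sequent $\sequent{\canon{\leftSeq}{}}{\canon{\rightSeq}{}}$ is provable in $\sentName$. Applying the right-to-left direction of Lemma~\ref{lem:canonical-formula-derive}, we obtain a proof of $\sequent{\leftSeq}{\rightSeq}$ in $\diagName$, as required.

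There is essentially no obstacle left at this stage: all the technical work has been absorbed into Lemma~\ref{lem:canonical-formula-derive}, where each rule of $\sentName$ on canonical formulas had to be simulated by an appropriate combination of diagrammatic rules (using the invertibility results of Lemma~\ref{lem:inversions} to handle the fact that one $\diagName$-step can correspond to several $\sentName$-steps, or vice versa). What remains here is only to spell out the two-line argument above; the only sanity check worth making explicit is semantic preservation of $\canon{\cdot}{}$, which one could either invoke as folklore or dispatch with a quick structural induction on diagrams.
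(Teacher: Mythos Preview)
Your argument is correct and matches the paper's own proof essentially verbatim: validity of the diagram sequent transfers to validity of the canonical-formula sequent, completeness of \(\sentName\) yields a sentential proof, and the right-to-left direction of Lemma~\ref{lem:canonical-formula-derive} converts that into a \(\diagName\)-proof. The only difference is cosmetic: you make the semantic-preservation step \(\val{D} = \val{\canon{D}{}}\) explicit, whereas the paper leaves it implicit in the one-line appeal to Lemma~\ref{lem:canonical-formula-derive}.
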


Figure~\ref{fig:ex-deduction} consists of a simple proof containing only Venn diagrams with a single contour.
It shows how disjunction and shaded zones interact. That is, the
presence of several shaded zones can be proven from simpler diagrams.
In particular,
 this proof shows the similarity between the separation rules (\(\eqDecL\) and \(\eqDecR\)) and the rules
 for disjunction. Furthermore, we can see how the rules \(\litL\) and \(\litR\) can be used to
 reduce a sequent with negative literals to an axiom.
 \begin{figure}[h]
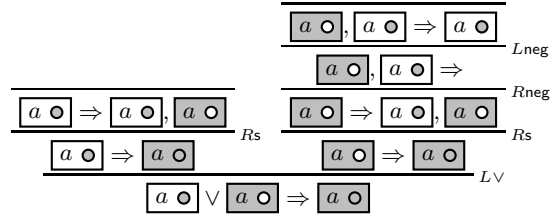

   \centering
   \makebox[\columnwidth]{
     \prftree[r]{\(\scriptstyle \lorL\)}
     {\prftree[r]{\(\scriptstyle \eqDecR\)}
       {\prftree{}{\sequent{\unitaryLiteralPos{a}{.15cm}}{\unitaryLiteralPos{a}{.15cm},\unitaryLiteralNeg{a}{.15cm}}}}
       {\sequent{\unitaryLiteralPos{a}{.15cm}}{\unitaryLEM{a}{.15cm}}}}
     {\prftree[r]{\(\scriptstyle \eqDecR\)}
       {\prftree[r]{\(\scriptstyle \litR\)}
         {\prftree[r]{\(\scriptstyle \litL\)}
           {\prftree{}
             {\sequent{\unitaryLiteralNeg{a}{.15cm},\unitaryLiteralPos{a}{.15cm}}{\unitaryLiteralPos{a}{.15cm}}}}
           {\sequent{\unitaryLiteralNeg{a}{.15cm},\unitaryLiteralPos{a}{.15cm}}{}}}
         {\sequent{\unitaryLiteralNeg{a}{.15cm}}{\unitaryLiteralPos{a}{.15cm},\unitaryLiteralNeg{a}{.15cm}}}}
       { \sequent{\unitaryLiteralNeg{a}{.15cm}}{\unitaryLEM{a}{.15cm} }}
     }
     {\sequent{\unitaryLiteralPos{a}{.15cm} \lor \unitaryLiteralNeg{a}{.15cm}}{\unitaryLEM{a}{.15cm}}}
   }
 \caption{Example of a Simple Proof}
   \label{fig:ex-deduction}
 \end{figure}

 \section{Admissible Rules}
 \label{sec:admissible}
 We show that some rules are admissible. To that end, we define the
 \emph{weight} of diagrams, 
 to order them by
the number of their syntactic elements.

\begin{definition}
  The \emph{weight} \(\weight{d}\) of a diagram is defined inductively. The base cases
  are given by \(    \weight{\false}  = 0\), \(    \weight{\unitaryLiteralPos{c}{.15cm}} = 0\), and
  \(\weight{\unitaryLiteralNeg{c}{.15cm}} = 1\). Otherwise we set
  \begin{align*}
    \weight{d} & =
                 \begin{cases}
                   \size{\ezonesOf{d}} +1 & \text{, if } d \text{ is a Venn diagram}\\
                   \size{\mzones{d}} +1 & \text{, if } d \text{ is a pure Euler diagram}\\
                   \weight{\euler{d}} + \weight{\venn{d}} + 1 & \text{, if } d \text{ is an Euler-Venn diagram}\\
                   \weight{d_1} + \weight{d_2} +1 & \text{, if } d = d_1 \otimes d_2 \text{ for } \otimes \in \{\land, \lor, \implies\}
                 \end{cases}
  \end{align*}
\end{definition}

\begin{lemma}
  \label{lem:general-axiom}
  For any diagram \(D\), the sequent \(\sequent{D, \leftSeq}{\rightSeq,D}\) is provable in \(\diagName\).
\end{lemma}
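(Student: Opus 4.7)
The plan is to proceed by strong induction on the weight $\weight{D}$. In every case the strategy is the same: apply the appropriate right-introduction rule for $D$ in the succedent (so that the shape of $D$ on the right is ``opened up''), and then the matching left-introduction rule on the copy of $D$ in the antecedent, leaving premisses that are either literal axioms or generalised axioms for strictly smaller subdiagrams, to which the induction hypothesis applies.

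For the base cases, $D = \bot$ is immediate from $\falseL$, $D = \top$ from $\trueR$, and $D = \unitaryLiteralPos{c}{.15cm}$ is literally an axiom of $\diagName$. The remaining atomic case $D = \unitaryLiteralNeg{c}{.15cm}$ is handled by first applying $\litR$ to obtain the premiss $\sequent{\unitaryLiteralPos{c}{.15cm}, \unitaryLiteralNeg{c}{.15cm}, \leftSeq}{}$, and then $\litL$, which leaves the axiom $\sequent{\unitaryLiteralNeg{c}{.15cm}, \unitaryLiteralPos{c}{.15cm}, \leftSeq}{\unitaryLiteralPos{c}{.15cm}}$. For a compound diagram $D = D_1 \otimes D_2$ with $\otimes \in \{\land, \lor, \implies\}$, the standard sentential rules of Fig.~\ref{fig:rules-boolean} (applied on the right, then on the left, with the usual pairing in the $\impliesL/\impliesR$ case) reduce the goal to generalised axioms for $D_1$ and $D_2$, both of strictly smaller weight.

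For a unitary diagram, I would split into four subcases: a Venn diagram $d$ with a single shaded zone $z = (\{n_1,\dots,n_k\},\{o_1,\dots,o_l\})$ is treated by $\singDecL$ followed by $\singDecR$, leaving $k+l$ premisses each of which is (up to weakening) a generalised axiom for a single literal; a Venn diagram with $|\ezonesOf{d}|>1$ is split as $\ezones = \ezones_1 \cup \ezones_2$ with both parts nonempty, and $\sepR$ then $\sepL$ yields premisses $\sequent{d_1,d_2,\leftSeq}{\rightSeq,d_i}$, where $\weight{d_i}<\weight{d}$; a pure Euler diagram with a single missing zone is handled symmetrically via $\impDecR$ and $\impDecL$, whose premisses are literal axioms; and a pure Euler diagram with $|\mzones{d}|>1$ is split via $\mzSepR$ and $\mzSepL$ analogously. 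Finally, for a genuine Euler-Venn diagram $d$, apply $\detR$ to obtain $\sequent{\euler{d}, d, \leftSeq}{\venn{d}}$ and then $\detL$ on the remaining $d$, which leaves the two premisses $\sequent{d, \euler{d}, \leftSeq}{\euler{d}}$ and $\sequent{\venn{d}, \euler{d}, \leftSeq}{\venn{d}}$; both are generalised axioms whose principal diagram has strictly smaller weight by the definition $\weight{d} = \weight{\euler{d}} + \weight{\venn{d}} + 1$.

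The bookkeeping obstacle is verifying that in every inductive case the weight strictly decreases, which is straightforward from the definition of $\weight{\cdot}$ provided the splits in the Venn and pure Euler cases are nontrivial (both $\ezones_1,\ezones_2$ or both $\mzones{d_1},\mzones{d_2}$ nonempty, which is possible precisely under the assumption $|\ezonesOf{d}|>1$ or $|\mzones{d}|>1$). The only slightly delicate points are the degenerate diagrams: a Venn diagram with empty shading denotes $0$ and a pure Euler diagram with no missing zone denotes $1$, and these can be identified with $\false$ and $\true$ respectively so that $\falseL$ and $\trueR$ apply directly, keeping the induction well-founded.
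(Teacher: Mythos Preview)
Your proposal is correct and follows exactly the approach the paper indicates: the paper's own proof is just the one line ``A straightforward induction on the weight of \(D\)'', and you have spelled out precisely that induction. The minor slip in describing the premisses after \(\sepR\)/\(\sepL\) (they are \(\sequent{d_i,\leftSeq}{\rightSeq,d_1,d_2}\), not \(\sequent{d_1,d_2,\leftSeq}{\rightSeq,d_i}\)) and the unnecessary ``up to weakening'' do not affect the argument.
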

\begin{proof}
  A straightforward induction on the weight of \(D\). \qed
\end{proof}
\begin{lemma}[Admissibility of Weakening]
  \label{lem:weakening}
  \begin{enumerate*}[label=\roman*)]
  \item  If \(\provable_n \sequent{\leftSeq}{\rightSeq}\), then also \(\provable_n \sequent{D, \leftSeq}{\rightSeq}\). \label{rule:left-weakening}
  \item If \(\provable_n \sequent{\leftSeq}{\rightSeq}\), then also \(\provable_n \sequent{\leftSeq}{\rightSeq, D}\).\label{rule:right-weakening}
  \end{enumerate*}
\end{lemma}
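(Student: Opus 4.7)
The plan is to prove both claims simultaneously by induction on the height $n$ of the given proof of $\sequent{\leftSeq}{\rightSeq}$, with a case analysis on the last rule applied. Both claims are height-preserving, so the induction hypothesis must be available for arbitrary $D$ and at every smaller height.

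For the base case $n=0$, the root sequent is either an axiom $\sequent{p,\leftSeq'}{\rightSeq',p}$ for a positive literal $p$, or an instance of $\falseL$ (so $\bot \in \leftSeq$), or an instance of $\trueR$ (so $\top \in \rightSeq$). In each situation the distinguished diagram still appears after inserting $D$ into either the antecedent or the succedent, so the weakened sequent is again an axiom or a rule instance, provable at height $0$. For the inductive step, the bulk of the rules in Figs.~\ref{fig:rules-boolean}, \ref{fig:rules-venn}, \ref{fig:rules-euler-diagrams}, and \ref{fig:rules-euler-venn-diagrams} carry the contexts $\leftSeq$ and $\rightSeq$ unchanged from conclusion to premises. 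For such a rule, I simply apply the induction hypothesis for the appropriate side of weakening to each premise (whose proof has height at most $n-1$), placing $D$ into the relevant multiset, and then reapply the same rule; the height bound is preserved.

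The subtle cases are the rules whose premises have a restricted succedent, namely $\impliesR$, $\impDecR$, $\detR$, and $\litR$, since in their conclusions the parameter $\rightSeq$ is arbitrary while in the premises it is either absent or replaced by a single principal diagram. For left weakening these pose no real difficulty: I push $D$ into the antecedent of each premise by the induction hypothesis and reapply the rule. For right weakening these cases actually become trivial, because the premise does not mention $\rightSeq$ at all; I reuse the premise derivation verbatim and reapply the rule with the enlarged succedent in the conclusion. This is precisely the design reason, flagged just before Fig.~\ref{fig:rules-venn-literal}, that these rules were stated with arbitrary $\rightSeq$.

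I do not expect a genuine obstacle: the calculus has been formulated exactly so that every rule either propagates its contexts transparently or, in the intuitionistic right-style rules, discards the succedent context altogether, and both patterns are compatible with a height-preserving weakening argument. The only point demanding mild attention is to remember, in the restricted-succedent cases for part (ii), not to invoke the induction hypothesis unnecessarily on a premise whose succedent $D$ is not even allowed to enter.
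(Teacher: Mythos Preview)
Your proposal is correct and follows essentially the same approach as the paper's proof: induction on the height, with the observation that rules preserving the contexts allow the induction hypothesis to be pushed into the premisses, while rules that discard the succedent in a premiss allow the weakened \(\rightSeq\) to be reintroduced directly in the conclusion. One small refinement: your list of rules with a restricted succedent should also include the left rules \(\litL\), \(\impliesL\), \(\detL\), and \(\impDecL\), each of which has at least one premiss not mentioning \(\rightSeq\); these are handled by exactly the pattern you describe (reuse that premiss verbatim, apply the induction hypothesis to any premiss that does carry \(\rightSeq\), and reapply the rule).
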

\begin{proof}
  By induction on the height of the proof for \(\sequent{\leftSeq}{\rightSeq}\). For
  \ref{rule:left-weakening}, we can add a new diagram into the antecedent of the
  sequent at the inductive step, since \(\leftSeq\) is kept from the premisses to the
  conclusion. In case~\ref{rule:right-weakening}, this works for most rules as well,
  except, where the succecedent of the premiss is restricted (e.g. \(\litR\)).
  In these cases, the weakening diagram \(D\) is simply added to the multiset \(\rightSeq\) in
  the rule's conclusion. \qed
\end{proof}

\begin{lemma}[Admissibility of Contraction]
  \label{lem:contraction}
  \begin{enumerate*}[label=\roman*)]
\item  If \(\provable_n \sequent{D,D,\leftSeq}{\rightSeq}\), then also \(\provable_n \sequent{D,\leftSeq}{\rightSeq}\).\label{rule:left-contraction}
\item If \(\provable_n \sequent{\leftSeq}{\rightSeq,D,D}\), then also \(\provable_n \sequent{\leftSeq}{\rightSeq,D}\).\label{rule:right-contraction}
\end{enumerate*}
\end{lemma}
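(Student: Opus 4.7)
The plan is to prove statements (i) and (ii) simultaneously by induction on the height $n$ of the given proof, with a case analysis on the last rule $R$ applied. For the base case $n = 0$, the sequent is either an axiom or an instance of $\falseL$ or $\trueR$; in each case, removing a duplicate copy of $D$ leaves a sequent of exactly the same shape, since the critical component (the shared literal $p$, or $\false$, or $\true$) is retained in the other occurrence.

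In the inductive step, the standard dichotomy applies. If neither copy of $D$ is the principal diagram of $R$, both copies persist unchanged in every premiss; the induction hypothesis contracts them there, and reapplying $R$ yields a proof of the conclusion of height at most $n$. When one copy of $D$ is principal, the strategy depends on which rule $R$ is.

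For rules that are height-preserving invertible (items~\ref{inv:bool-rules} and~\ref{inv:imp-rules} of Lemma~\ref{lem:inversions}), I will apply invertibility to the remaining copy of $D$ in each premiss, obtaining premisses in which both occurrences of $D$ have been decomposed into their constituents. The induction hypothesis then contracts each constituent, and reapplying $R$ delivers the goal. For example, with $R = \landL$ and $D = E \land F$, the premiss $\sequent{E, F, E \land F, \leftSeq}{\rightSeq}$ becomes $\sequent{E, F, E, F, \leftSeq}{\rightSeq}$ by invertibility of $\landL$, two applications of the IH yield $\sequent{E, F, \leftSeq}{\rightSeq}$, and $\landL$ is reapplied. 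For the succedent-restricting rules $\impliesR$, $\litR$, $\impDecR$, and $\detR$, the premiss is independent of the other copy in $\rightSeq$ (which is simply discarded by the rule), so one reapplies the rule without any contraction work.

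The main obstacle lies with the partially-invertible rules $\detL$ and $\impDecL$, where Lemma~\ref{lem:inversions} provides only items~\ref{inv:ldet} and~\ref{inv:lidec}. For $\detL$ applied to $\sequent{d, d, \leftSeq}{\rightSeq}$ with the first $d$ principal, the left premiss $\sequent{d, d, \leftSeq}{\euler{d}}$ contracts directly by IH, while the right premiss $\sequent{\venn{d}, d, \leftSeq}{\rightSeq}$ is first transformed via inversion~\ref{inv:ldet} applied to the remaining $d$, yielding $\sequent{\venn{d}, \venn{d}, \leftSeq}{\rightSeq}$, which IH then contracts on $\venn{d}$. For $\impDecL$, the ``negative'' premisses $\sequent{\unitaryLiteralPos{o_j}{.15cm}, d, \leftSeq}{\rightSeq}$ are transformed via inversion~\ref{inv:lidec} instantiated at $i = j$ into $\sequent{\unitaryLiteralPos{o_j}{.15cm}, \unitaryLiteralPos{o_j}{.15cm}, \leftSeq}{\rightSeq}$, and then contracted on the literal by IH; the ``positive'' premisses $\sequent{d, d, \leftSeq}{\unitaryLiteralPos{n_i}{.15cm}}$ contract directly. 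Once every group of premisses has been contracted, reapplying the original rule produces the required proof of height at most $n$, completing both parts of the induction.
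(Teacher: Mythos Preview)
Your approach is the same as the paper's: induction on proof height using the inversion lemma, in the style of Negri et al.; the paper's own proof is in fact a two-sentence pointer to exactly this method, singling out only the succedent-restricting rules for case~(ii).

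Your elaboration is faithful, but your case list is incomplete: you omit \(\impliesL\) and \(\litL\). The latter is easy---the principal negative literal is retained in the premiss, so both copies persist and the IH applies directly (and for right contraction \(\litL\) is just another succedent-restricting rule, handled by adjusting \(\rightSeq\)). The rule \(\impliesL\), however, needs exactly the partial-inversion treatment you gave \(\detL\) and \(\impDecL\): the left premiss keeps both copies of \(D \implies E\) and contracts by IH, while for the right premiss \(\sequent{E, D \implies E, \leftSeq}{\rightSeq}\) you need the standard inversion (from \(\provable_n \sequent{A \implies B, \Gamma}{\Delta}\) obtain \(\provable_n \sequent{B, \Gamma}{\Delta}\)), which is not listed as a separate item in Lemma~\ref{lem:inversions} but is part of Negri's Lemma~5.3.4 cited there. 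With these two cases added your argument is complete.
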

\begin{proof}
Both cases can be proven by an induction on the height of
  proofs using Lemma~\ref{lem:inversions} and arguments similar to Negri et al.~\cite{Negri2001}.
  In case~\ref{rule:right-contraction}, the only special case are rules with
  restricted right context in the premisses (e.g. \(\detR\)), where
  the contraction is done by changing the right
  context appropriately. \qed
\end{proof}

\begin{lemma}[Admissibility of Cut]
  If both \({\sequent{\leftSeq}{D,\rightSeq}}\) and \({\sequent{D,\leftSeq^\prime}{\rightSeq^\prime}}\)
  are provable, then also \({\sequent{\leftSeq, \leftSeq^\prime}{\rightSeq,\rightSeq^\prime}}\)
is provable.  
\end{lemma}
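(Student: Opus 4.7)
The plan is to reduce cut admissibility in \(\diagName\) to the already established cut admissibility in the sentential calculus \(\sentName\), exploiting Lemma~\ref{lem:canonical-formula-derive} as a bridge. The elementary observation is that the canonical-formula translation \(\canon{\cdot}{}\) distributes over multiset union, so for any two multisets \(\leftSeq, \leftSeq^\prime\) of compound diagrams we have \(\canon{\leftSeq, \leftSeq^\prime}{} = \canon{\leftSeq}{}, \canon{\leftSeq^\prime}{}\), and similarly on the succedent side.

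First, from the two given derivations \(\provable \sequent{\leftSeq}{D, \rightSeq}\) and \(\provable \sequent{D, \leftSeq^\prime}{\rightSeq^\prime}\) in \(\diagName\), I apply Lemma~\ref{lem:canonical-formula-derive} (the forward direction) to obtain proofs of \(\sequent{\canon{\leftSeq}{}}{\canon{D}{}, \canon{\rightSeq}{}}\) and \(\sequent{\canon{D}{}, \canon{\leftSeq^\prime}{}}{\canon{\rightSeq^\prime}{}}\) in \(\sentName\). Second, since cut is admissible in \(\sentName\) (cf.\ Remark~\ref{rem:props-sentential-calculus}), I apply it with the cut formula \(\canon{D}{}\) to derive \(\sequent{\canon{\leftSeq}{}, \canon{\leftSeq^\prime}{}}{\canon{\rightSeq}{}, \canon{\rightSeq^\prime}{}}\) in \(\sentName\). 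Third, the reverse direction of Lemma~\ref{lem:canonical-formula-derive} turns this into a proof of \(\sequent{\leftSeq, \leftSeq^\prime}{\rightSeq, \rightSeq^\prime}\) in \(\diagName\), as required.

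The only delicate point is to confirm that Lemma~\ref{lem:canonical-formula-derive} applies in both directions to the multiset-style sequents involved here; this is immediate from the elementwise definition of \(\canon{\cdot}{}\) and the fact that \(D\) is itself a (compound) diagram whose canonical formula \(\canon{D}{}\) is the cut formula used in \(\sentName\). Thus the proof is essentially bookkeeping on top of the earlier equivalence with \(\sentName\).

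The main obstacle, were one to attempt a direct argument without the canonical-formula reduction, would be the sheer number of cases: a standard Gentzen-style proof by lexicographic induction on \((\weight{D}, n_1 + n_2)\), with \(n_1, n_2\) the heights of the two given derivations, requires permuting cut upwards whenever \(D\) is not principal in one of the final rules, and decomposing \(D\) via the inversion results of Lemma~\ref{lem:inversions} together with the admissibility of weakening (Lemma~\ref{lem:weakening}) and contraction (Lemma~\ref{lem:contraction}) whenever \(D\) is principal in both. The principal-principal cases for the diagrammatic rules \(\sepL/\sepR\), \(\singDecL/\singDecR\), \(\reduceL/\reduceR\), \(\mzSepL/\mzSepR\), \(\impDecL/\impDecR\), and \(\detL/\detR\) would all need to be handled explicitly. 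Routing through \(\sentName\) collapses all of these cases into a single appeal to the existing meta-theory of intuitionistic sequent calculus.
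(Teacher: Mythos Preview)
Your proof is correct but follows a different route from the paper. The paper gives a \emph{semantic} argument: soundness (Theorem~\ref{thm:soundess}) turns the two provable sequents into the inequalities \(\val{\leftSeq} \leq \val{D} \hor \val{\rightSeq}\) and \(\val{D} \hand \val{\leftSeq^\prime} \leq \val{\rightSeq^\prime}\); a short lattice computation then yields \(\val{\leftSeq} \hand \val{\leftSeq^\prime} \leq \val{\rightSeq} \hor \val{\rightSeq^\prime}\), and completeness closes the loop. Your argument instead performs the cut \emph{syntactically} inside \(\sentName\), using Lemma~\ref{lem:canonical-formula-derive} as a transport in both directions and the known cut admissibility of \(\sentName\) (Remark~\ref{rem:props-sentential-calculus}). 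Both arguments ultimately rest on the same infrastructure---the forward half of Lemma~\ref{lem:canonical-formula-derive} already invokes soundness and \(\sentName\)-completeness, and \(\diagName\)-completeness goes through the backward half---so the difference is really where the cut step itself is discharged: in the Heyting algebra for the paper, in \(\sentName\) for you. Your route is a clean modular reduction; the paper's is slightly more self-contained in that it does not need to quote cut admissibility for \(\sentName\) as a separate fact. The paper also remarks, as you do, that a direct Gentzen-style induction on weight and cut-height is possible but case-heavy.
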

\begin{proof}
  We use a semantic proof, employing both soundness and completeness of \(\diagName\). If both
  sequents are provable, they are also valid, by soundness. So choose an arbitrary valuation
  \(\valSing\). Then \(\val{\leftSeq} \leq \val{D} \hor \val{\rightSeq}\) and
  \(\val{D} \hand \val{\leftSeq^\prime} \leq \val{\rightSeq^\prime}\). Now we have
  \(\val{\leftSeq} \hand \val{\leftSeq^\prime} \leq  (\val{D} \hor \val{\rightSeq}) \hand \val{\leftSeq^\prime}
  = (\val{D} \hand \val{\leftSeq^\prime}) \hor (\val{\rightSeq} \hand \val{\leftSeq^\prime})
  \leq \val{\rightSeq^\prime} \hor (\val{\rightSeq} \hand \val{\leftSeq^\prime})
  \leq \val{\rightSeq^\prime} \hor \val{\rightSeq}
  \). These relations are due to the first premiss, distributivity, the second premiss and
  the fact \(a \hand b \leq a\), respectively.  
  Since \(\valSing\) was arbitrary, \(\sequent{\leftSeq, \leftSeq^\prime}{\rightSeq,\rightSeq^\prime}\)
  is valid, and due to the completeness of \(\diagName\), we have that \(\sequent{\leftSeq, \leftSeq^\prime}{\rightSeq,\rightSeq^\prime}\)
  is provable. \qed
\end{proof}

\begin{remark}
  It is also possible to prove cut admissibility with a purely syntactic
  argument by adapting the inductive proof for the system \(\sentName\) given
  by Negri et al.\cite{Negri2001}. The proof consists of a replacement
  of each cut application with a derivation, where each cut either
  posesses a lower cut-height, or the weight of the cut diagram is lower.
  Within that proof, most cases
  are straightforward, where \(\singDecL\), \(\singDecR\), \(\reduceL\),
  \(\reduceR\), \(\mzSepL\) and \(\mzSepR\) are treated similarly
  to the rules \(\landL\) and \(\landR\), while \(\eqDecL\) and \(\eqDecR\)
  play roles similar to \(\lorL\) and \(\lorR\). The rules
  \(\litL\), \(\litR\), \(\detL\), \(\detR\), \(\impDecL\) and \(\impDecR\)
  need special attention, since they restrict the succedent in the premiss.
  However, the proof proceeds in these cases along the lines of the
  the treatment of \(\impliesL\) and
  \(\impliesR\) in \(\sentName\). While the number of cases
  to consider  increases, the arguments and constructions are similar.
  As an example, we present the case where the cut formula is principal
  in both premisses, and is a negative literal. That is, we have a derivation
  of the following form:
  \begin{align*}    
  \prftree[l]{\(\scriptstyle \cut\)}
  {\prftree[l]{\(\scriptstyle \litR\)}
    {\sequent{\unitaryLiteralPos{c}{.15cm},\leftSeq}{}}
    {\sequent{\leftSeq}{\rightSeq, \unitaryLiteralNeg{c}{.15cm}}}
  }
  {\prftree[r]{\(\scriptstyle \litL\)}
    {\sequent{ \unitaryLiteralNeg{c}{.15cm},\leftSeq^\prime}{\unitaryLiteralPos{c}{.15cm}}}
    {\sequent{ \unitaryLiteralNeg{c}{.15cm},\leftSeq^\prime}{\rightSeq^\prime}}
  }
  {\sequent{\leftSeq, \leftSeq^\prime}{\rightSeq, \rightSeq^\prime}}
  \end{align*}

  Observe that the cut-height of this cut application is \(m+n+2\), where
  \(m\) is the height of the proof of the left premiss and \(n\)
  the height of the proof of the right premiss. Then, we can replace this
  derivation with the following.
  \begin{align*}    
    \prftree[l]{\(\scriptstyle \weakR, \contractL\)}
    {\prftree[l]{\(\scriptstyle \cut\)}
    {\prftree[l]{\(\scriptstyle \cut\)}
    {\prftree[l]{\(\scriptstyle \litR\)}
    {\sequent{\unitaryLiteralPos{c}{.15cm},\leftSeq}{}}
    {\sequent{\leftSeq}{\rightSeq, \unitaryLiteralNeg{c}{.15cm}}}
    }
    {\sequent{ \unitaryLiteralNeg{c}{.15cm},\leftSeq^\prime}{\unitaryLiteralPos{c}{.15cm}}}
    {\sequent{\leftSeq, \leftSeq^\prime}{\rightSeq,\unitaryLiteralPos{c}{.15cm} }}
    }
    {\sequent{\unitaryLiteralPos{c}{.15cm},\leftSeq}{}}
    {\sequent{\leftSeq,\leftSeq^\prime,\leftSeq}{\rightSeq}}
    }
    {\sequent{\leftSeq,\leftSeq^\prime}{\rightSeq,\rightSeq^\prime}}
  \end{align*}

  In this derivation, the uppermost cut has a lower cut-height, while
  the second cut uses a cut diagram of lower weight. Here, it is crucial that
  the negative literal has a higher weight than the positive literal. The last step
  in the derivation is a sequence of weakening and contraction. The
  treatment of the other cases is analogous.
\end{remark}


 \begin{figure}
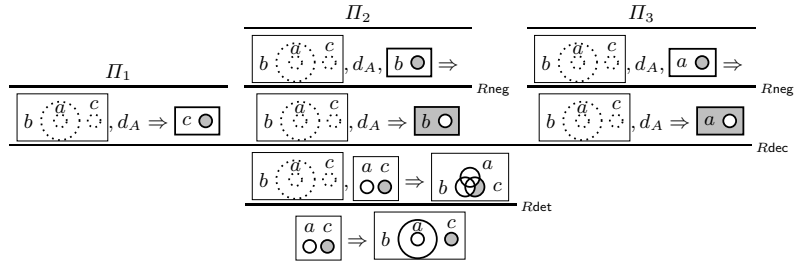

   \begin{center}
     \scalebox{.85}{
     \prftree[r]{\(\scriptstyle \detR\)}
     {\prftree[r]{\(\scriptstyle \singDecR\)}
       {\prftree[r]{}
         {\Pi_1}
         {\sequent{\subsdisjEDCstr{a}{b}{c}{.2cm}, d_A}{\unitaryLiteralPos{c}{.2cm}}}
       }
       {\prftree[r]{\(\scriptstyle \litR\)}
         {\prftree[r]{}
           {\Pi_2}
           {\sequent{\subsdisjEDCstr{a}{b}{c}{.2cm}, d_A, \unitaryLiteralPos{b}{.2cm}}{}}
           }
         {\sequent{\subsdisjEDCstr{a}{b}{c}{.2cm}, d_A}{\unitaryLiteralNeg{b}{.2cm}}}
       }
       {\prftree[r]{\(\scriptstyle \litR\)}
         {\prftree[r]{}
           {\Pi_3}
           {\sequent{\subsdisjEDCstr{a}{b}{c}{.2cm}, d_A, \unitaryLiteralPos{a}{.2cm}}{}}
         }
         {\sequent{\subsdisjEDCstr{a}{b}{c}{.2cm}, d_A}{\unitaryLiteralNeg{a}{.2cm}}}
       }
       {\sequent{\subsdisjEDCstr{a}{b}{c}{.2cm}, \disjointEDEx{a}{c}{.2cm}}{\exVenn}}
     }
     {\sequent{\disjointEDEx{a}{c}{.2cm}}{\subsdisjEDEx{a}{b}{c}{.2cm}}}
     }
   \end{center}   
   \caption{Proof using Euler-Venn diagrams}
   \label{fig:example_proof2}
 \end{figure}
 
A  derivation that uses all three types of diagrams can be found in Fig.~\ref{fig:example_proof2}. We 
explain parts
 of the proof from bottom to top. The last applied rule detaches the pure Euler part from
the Venn part of the succedent,
   so that we can then decompose the single shaded zone into literals.
 This splits the proof into three branches, which we treat in the sub-derivations \(\Pi_1\),
 \(\Pi_2\) and \(\Pi_3\), respectively. For reasons of brevity, we use the abbrevations for
diagrams as shown in Table~\ref{tab:example_abbrev}. Now, the two right proof branches
 contain a negative literal in the succedent, which we move to the antecedent with an
 application of \(\litR\).
 Then, all three proof branches proceed similarly:
 we reduce the pure Euler diagram \(d^\ast_C\) into smaller diagrams. 
 The
 set of missing zones is \(\mzones{d^\ast_C} = \{(\{a\},\{b,c\}), (\{a,c\}, \{b\}), (\{b,c\}, \{a\}), (\{a,b,c\}, \emptyset) \}\),
 and  each of these missing zones has at least one adjacent missing zone. For example,  \(\adj{(\{a\},\{b,c\})}{c} = (\{a,c\},\{b\})\).
 In particular, the reduction of \(d^\ast_C\) with respect
 to any of the contours
 \setlength\intextsep{0pt}
\begin{wraptable}[6]{l}{.4\textwidth}
   \caption{Diagram Abbreviations}
   \label{tab:example_abbrev}
   \begin{tabular}{ccc}
     \subsdisjEDEx{a}{b}{c}{.25cm} &
     \subsdisjEDCstr{a}{b}{c}{.25cm} & 
     \disjointEDEx{a}{c}{.25cm} \\[.75em]                                     
\(d_C\)  & \(d_C^\ast\)  & \(d_A\) \end{tabular}
\end{wraptable}
\setlength{\intextsep}{\savedintextsep}
 \(a\), \(b\) and \(c\) still contains missing zones. It is easy to check that the
 three diagrams shown in the derivations are indeed these reductions. Then, \(\Pi_1\) proceeds by detaching the
 Euler and Venn aspects of the diagram \(d_A\), which immediately closes the left branch, due to Lemma~\ref{lem:general-axiom}.
 The right branch ends in an axiom after decomposing the single shaded zone in the antecedent. Within
 \(\Pi_2\) there is a similar structure, denoted by the derivation \(\Pi_1^\prime\), where
 the antecedent contains slightly different diagrams, but the application of rules is similar. The other
 branches proceed similarly. This example shows, how the reduction rules lead to smaller diagrams, and,
 as we claim, better readable diagrams, due to the reduced clutter~\cite{John2005}. Furthermore, it shows
 how the admissible rules may reduce the size of the proofs, here in the form of the generalised axioms
 proven admissible in Lemma~\ref{lem:general-axiom}.

  \begin{figure}
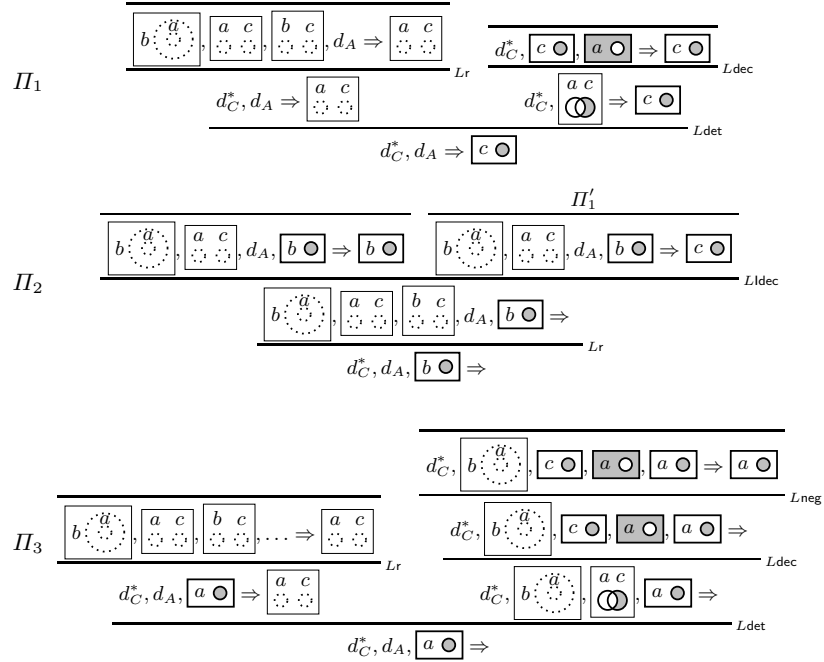

   \begin{tabular}{cc}
     \(\Pi_1\)    &
     \scalebox{.85}{
                    \parbox[c]{\hsize}{
                    \prftree[r]{\(\scriptstyle \detL\)}
                    {\prftree[r]{\(\scriptstyle \reduceL\)}
                    {\prftree{}{\sequent{\subsetEDCstr{b}{a}{.2cm}, \disjointEDCstr{a}{c}{.2cm}, \disjointEDCstr{b}{c}{.2cm},d_A}{\disjointEDCstr{a}{c}{.2cm}}}}
                    {\sequent{d_C^\ast, d_A}{ \disjointEDCstr{a}{c}{.2cm}}}
                    }
                    {
                    \prftree[r]{\(\scriptstyle \singDecL\)}
                    {\prftree{}
                    {\sequent{d_C^\ast, \unitaryLiteralPos{c}{.2cm}, \unitaryLiteralNeg{a}{.2cm}}{\unitaryLiteralPos{c}{.2cm}}}
                    }
                    {\sequent{d_C^\ast, \cNotAVDEx}{\unitaryLiteralPos{c}{.2cm}}}
                    }                    
                    {\sequent{d_C^\ast, d_A}{\unitaryLiteralPos{c}{.2cm}}{}}
                    }
                    }\vspace{1em}
     \\
     \(\Pi_2\)   &     \scalebox{.85}{

                   \parbox[c]{\hsize}{
                   \prftree[r]{\(\scriptstyle \reduceL \)}
                   {\prftree[r]{\(\scriptstyle \impDecL\)}
                   {\prftree{}{\sequent{\subsetEDCstr{b}{a}{.2cm}, \disjointEDCstr{a}{c}{.2cm},d_A, \unitaryLiteralPos{b}{.2cm} }{\unitaryLiteralPos{b}{.2cm}}}}
                   {\prftree[r]{}
                   {
                   \Pi_1^\prime
                   }
                   {\sequent{\subsetEDCstr{b}{a}{.2cm}, \disjointEDCstr{a}{c}{.2cm},d_A, \unitaryLiteralPos{b}{.2cm} }{\unitaryLiteralPos{c}{.2cm}}}
                   }
                   {\sequent{\subsetEDCstr{b}{a}{.2cm}, \disjointEDCstr{a}{c}{.2cm}, \disjointEDCstr{b}{c}{.2cm},d_A, \unitaryLiteralPos{b}{.2cm} }{}}}
                   {\sequent{d_C^\ast, d_A, \unitaryLiteralPos{b}{.2cm}}{}}
                   }
                   }\vspace{2em}
                   
     \\
     \(\Pi_3\) &     \scalebox{.85}{

                 \parbox[c]{\hsize}{
                 \prftree[r]{\(\scriptstyle \detL\)}                 
                 {\prftree[r]{\(\scriptstyle \reduceL\)}
                 {\prftree{}
                 {\sequent{\subsetEDCstr{b}{a}{.2cm}, \disjointEDCstr{a}{c}{.2cm}, \disjointEDCstr{b}{c}{.2cm},\dots}{\disjointEDCstr{a}{c}{.2cm}}}}
                 {\sequent{d_C^\ast,d_A, \unitaryLiteralPos{a}{.2cm}}{\disjointEDCstr{a}{c}{.2cm}}}
                 }
                 {\prftree[r]{\(\scriptstyle \singDecL\)}
                 {\prftree[r]{\(\scriptstyle \litL\)}
                 {\prftree{}
                 {\sequent{d_C^\ast,\subsetEDCstr{b}{a}{.2cm},\unitaryLiteralPos{c}{.2cm},\unitaryLiteralNeg{a}{.2cm},\unitaryLiteralPos{a}{.2cm}}{\unitaryLiteralPos{a}{.2cm}}}
                 }
                 {\sequent{d_C^\ast,\subsetEDCstr{b}{a}{.2cm},\unitaryLiteralPos{c}{.2cm},\unitaryLiteralNeg{a}{.2cm},\unitaryLiteralPos{a}{.2cm}}{}}
                 }
                 {\sequent{d_C^\ast,\subsetEDCstr{b}{a}{.2cm}, \cNotAVDEx,\unitaryLiteralPos{a}{.2cm}}{}}                                 
                 }
                 {\sequent{d_C^\ast, d_A,\unitaryLiteralPos{a}{.2cm}}{}}                 
                 }
                 }
   \end{tabular}
   \caption{Auxiliary Derivations for Fig.~\ref{fig:example_proof2}}
   \label{fig:aux_deriv}
 \end{figure}



\section{Conclusion}
\label{sec:conc}
In this paper, we presented an intuitionistic interpretation
of Euler-Venn diagrams, based on a semantics of Heyting algebras.
We then defined a cut-free sequent calculus \(\diagName\), which we have proven
to be sound and complete with respect to this semantics. Furthermore,
we have shown that the structural rules of contraction, weakening
and cut are admissible.

For this visualisation, we deviated from classical
Euler-Venn diagrams in two ways: we did not treat missing zones and shaded
zones as equivalent, and we  introduced the new syntactic element of
dashed contours.

The first deviation is due to the basic
restrictions of intuitionistic reasoning. More specifically,
intuitionistic implication cannot be treated as an abbreviation of the other
operators. To have a syntax explicitly for implications,  we need to
increase the number of distinct syntactic elements
of Euler-Venn diagrams. Hence, distinguishing these two elements
is a natural choice. Of course, it can be argued that the choice
we made is not the correct one, and that shading should be
used to reflect implications. However, we think that
since the representation of missing zones (or rather their absence)
introduces a direction into the diagram, in the form of
inclusions, this choice is justified.

The introduction of dashed diagrams is more debatable. Arguably, the need
for distinguishing pure Euler diagrams by dashing arises, since we interpret
the missing zones of Euler-Venn diagrams as a kind of ``constructive precondition'' for
the construction of the elements denoted by the shaded zones. That is, in the constructive
interpretation of intuitionistic reasoning, an Euler-Venn diagram
means that, given a construction as indicated by the missing zones, we have another
construction for
the assertions given by the shaded zones. Hence, there is an additional
implication within the semantics of Euler-Venn diagrams, as can also be seen in 
the rules of \(\diagName\) to detach the pure Euler aspects from the Venn aspects
of a diagram. These rules behave similarly to the rules for implication in 
sentential intuitionistic sequent calculus.

However, the introduction of new syntactic elements is necessary, due
to the independence of the operators, and the restrictive nature
of Euler-Venn diagrams makes this need even more overt. Compare for example
the intuitionistic systems based on Existential Graphs (EGs).
While the operations in classical EGs are denoted by juxtaposition and  cuts,
reflecting conjunction and negation, respectively, the assertive graphs \cite{Bellucci2018}
 explicitly introduce notation for disjunction, and also treat the
 ``scroll'' as a distinct element. Similarly, the intuitionistic
 EGs \cite{Ma2019} include the notion of \emph{\(n\)-scrolls} for each
 \(n > 0\).

We think that our system stretches the idea of Euler-Venn diagrams quite
far. In particular, logics that need even more independent operators,
for example substructural logics and modal logics, may not be well-matched
for such a diagrammatic system. While it may be possible to define
such an interpretation, the type of new syntactic elements is far from
obvious, if we want to keep the diagrammatic structure of
 Euler-Venn diagrams. Of course, it is always possible to add new
operators to the compound part of the reasoning system, but
we think that such an addition misses the point of a
diagrammatic reasoning system. 

Still, there are future directions this work can be taken into.
For example, our sequent calculus resembles sentential sequent
calculus, while typical Euler-Venn reasoning
systems work by adding syntax to single
diagrams, and then removing unnecessary parts \cite{Burton2012}. It is interesting to see, if
we can define such a system for intuitionistic
Euler-Venn diagrams. We assume that for the rules
to introduce and remove contours, or to copy contours
from one diagram into another, the reduction of
a pure Euler diagram (cf. Def~\ref{def:reduction} and
Lemma~\ref{lem:sem-reduction}) will play a significant
role.


\bibliographystyle{splncs03}
\bibliography{lit}

\begin{thebibliography}{10}
\providecommand{\url}[1]{\texttt{#1}}
\providecommand{\urlprefix}{URL }

\bibitem{Bellucci2018}
Bellucci, F., Chiffi, D., Pietarinen, A.V.: Assertive graphs. Journal of
  Applied Non-Classical Logics  28(1),  72--91 (Jan 2018),
  \url{https://doi.org/10.1080/11663081.2017.1418101}

\bibitem{Burton2012}
Burton, J., Stapleton, G., Howse, J.: Completeness {Proof} {Strategies} for
  {Euler} {Diagram} {Logics}. In: {Euler} {Diagrams} 2012. vol. 854, pp. 2--16.
  CEUR (2012)

\bibitem{Dragalin1988}
Dragalin, A.G.: Mathematical intuitionism. Introduction to proof theory,
  Translations of mathematical monographs, vol.~67. American Mathematical
  Society (1988)

\bibitem{deFreitas2012}
de~Freitas, R., Viana, P.: A graph calculus for proving intuitionistic relation
  algebraic equations. In: DIAGRAMS 2012. pp. 324--326. Springer (2012)

\bibitem{Gentzen1935}
Gentzen, G.: Untersuchungen \"{u}ber das logische {S}chlie\ss en {I}.
  Mathematische Zeitschrift  39,  176--210 (1935)

\bibitem{Hammer1996}
Hammer, E.: Peircean {Graphs} for {Propositional} {Logic}. In: Allwein, G.,
  Barwise, J. (eds.) Logical {Reasoning} with {Diagrams}, pp. 129--147. Oxford
  University Press (1996)

\bibitem{Peirce1931}
Hartshorne, C., Weiss, P. (eds.): C.S. Peirce: Collected Papers, vol. 1–6.
  A.W. Harvard University Press, Cambridge (1931–1958)

\bibitem{Howse2005}
Howse, J., Stapleton, G., Taylor, J.: Spider {Diagrams}. LMS Journal of
  Computation and Mathematics  8,  145--194 (Jan 2005)

\bibitem{John2005}
John, C.: Measuring and reducing clutter in euler diagrams. Electronic Notes in
  Theoretical Computer Science  134(Supplement C),  103 -- 126 (2005), euler
  Diagrams 2004

\bibitem{Legris2018}
Legris, J.: Existential {Graphs} as a {Basis} for {Structural} {Reasoning}. In:
  Chapman, P., Stapleton, G., Moktefi, A., Perez-Kriz, S., Bellucci, F. (eds.)
  Diagrammatic {Representation} and {Inference}. pp. 590--597. Lecture {Notes}
  in {Computer} {Science}, Springer International Publishing, Cham (2018)

\bibitem{Linker2018}
Linker, S.: {S}equent {C}alculus for {E}uler diagrams. In: Diagrammatic
  Representation and Inference - 10th International Conference, Diagrams 2018,
  Edinburgh, UK, June 18-22, 2018, Proceedings. pp. 399--407 (2018),
  \url{https://doi.org/10.1007/978-3-319-91376-6\_37}

\bibitem{Ma2018}
Ma, M., Pietarinen, A.V.: A {Weakening} of {Alpha} {Graphs}: {Quasi}-{Boolean}
  {Algebras}. In: Chapman, P., Stapleton, G., Moktefi, A., Perez-Kriz, S.,
  Bellucci, F. (eds.) Diagrammatic {Representation} and {Inference}. pp.
  549--564. Lecture {Notes} in {Computer} {Science}, Springer International
  Publishing, Cham (2018)

\bibitem{Ma2019}
Ma, M., Pietarinen, A.V.: A {Graphical} {Deep} {Inference} {System} for
  {Intuitionistic} {Logic}. Logique et Analyse  245,  73--114 (2019)

\bibitem{Mineshima2010}
Mineshima, K., Okada, M., Takemura, R.: Two {Types} of {Diagrammatic}
  {Inference} {Systems}: {Natural} {Deduction} {Style} and {Resolution}
  {Style}. In: DIAGRAMS 2010. pp. 99--114 (2010)

\bibitem{Negri2001}
Negri, S., von Plato, J., Ranta, A.: Structural Proof Theory. Cambridge
  University Press (2001)

\bibitem{Ono2019}
Ono, H.: Proof {Theory} and {Algebra} in {Logic}. Short {Textbooks} in {Logic},
  Springer, Singapore, 1 edn. (2019)

\bibitem{Rasiowa1963}
Rasiowa, H., Sikorski, R.: The {Mathematics} of {Metamathematics}. Panstwowe
  Wydawnictwo Naukowe, Warszaw (1963)

\bibitem{Shin1995}
Shin, S.J.: The logical status of diagrams. Cambridge University Press (1995)

\bibitem{Stapleton2007}
Stapleton, G., Masthoff, J.: Incorporating negation into visual logics: A case
  study using {E}uler diagrams. In: VLC 2007. pp. 187--194. Knowledge Systems
  Institute (2007)

\bibitem{Takemura18}
Takemura, R.: Towards a proof theory for heterogeneous logic combining
  sentences and diagrams. In: Diagrammatic Representation and Inference - 10th
  International Conference, Diagrams 2018, Edinburgh, UK, June 18-22, 2018,
  Proceedings. pp. 607--623 (2018),
  \url{https://doi.org/10.1007/978-3-319-91376-6\_55}

\bibitem{Venn1881}
Venn, J.: Symbolic Logic. Macmillan and Company, London (1881)

\end{thebibliography}

\end{document}